\theoremstyle{plain}
\newtheorem{theorem}{Theorem}[section]
\newtheorem{lemma}[theorem]{Lemma}
\theoremstyle{definition}
\newtheorem{definition}[theorem]{Definition}
\title{A Simple and Fast $(3+\varepsilon)$-approximation for \\Constrained Correlation Clustering}
\author{Nate Veldt \\ Department of Computer Science and Engineering \\  Texas A\&M University \\ nveldt@tamu.edu}
\date{}
\begin{document}

\maketitle

\begin{abstract} 
	In \textsc{Constrained Correlation Clustering}, the goal is to cluster a complete signed graph in a way that minimizes the number of negative edges inside clusters plus the number of positive edges between clusters, while respecting hard constraints on how to cluster certain \emph{friendly} or \emph{hostile} node pairs.
	Fischer et al.~\cite{fischer2025faster} recently developed a $\tilde{O}(n^3)$-time 16-approximation algorithm for this problem.
	We settle an open question posed by these authors by designing an algorithm that is equally fast but brings the approximation factor down to $(3+\varepsilon)$ for arbitrary constant $\varepsilon > 0$.
	Although several new algorithmic steps are needed to obtain our improved approximation, our approach maintains many advantages in terms of simplicity. In particular, it relies mainly on rounding a (new) covering linear program, which can be approximated quickly and combinatorially. Furthermore, the rounding step amounts to applying the very familiar \textsf{Pivot} algorithm to an auxiliary graph.
	Finally, we develop much simpler algorithms for instances that involve only \emph{friendly} or only \emph{hostile} constraints.
\end{abstract}

\section{Introduction}
Given a complete unweighted signed graph $G = (V,E^+, E^-)$,
\ccfull{} (\cc{}) seeks to partition $V$ in a way that minimizes the number of positive edges  between clusters (called \emph{positive mistakes}) plus negative edges inside clusters (\emph{negative mistakes})~\cite{bansal2004correlation}.
The problem has been studied extensively from the perspective of approximation algorithms, and variants of the problem have been applied to diverse applications in social network analysis~\cite{veldt2018correlation,yu2024parclusterers,Li2017motifcc}, bioinformatics~\cite{bhattacharya2008divisive,ben1999clustering}, databases~\cite{hassanzadeh2009framework}, computer vision~\cite{kim2011highcc,yarkony2012fast}, and more. See the work of Bonchi et al.~\cite{bonchi2022correlation} for a recent detailed survey.

This paper focuses on \cccfull{} (\ccc{}), a variant of \cc{} in which $G$ is accompanied by a set of \emph{friendly} edges $\friendly$ and \textit{hostile} edges $\hostile$~\cite{vanzuylen2009deterministic,fischer2025faster,kalavas2025towards}. The goal is to minimize mistakes while ensuring that both endpoints of each friendly edge are in the same cluster, and both endpoints of each hostile edge are in separate clusters. Our main contribution is a $\tilde{O}(n^3)$-time $(3+\varepsilon)$-algorithm for \ccc{}, where $n = |V|$, $\varepsilon > 0$ is an arbitrary constant, and $\tilde{O}$-notation hides polylogarithmic factors (i.e., $O(\text{polylog}(n)) = \tilde{O}(1)$). This matches the best approximation factor of $3$~\cite{vanzuylen2009deterministic} up to an arbitrarily small constant, while satisfying desirable properties in terms of speed and simplicity. In particular, the latter 3-approximation relies on a canonical linear programming relaxation that takes $\Omega(n^{7.11})$ time to solve, whereas ours is much faster and can be made completely combinatorial. Our result directly answers an open question of Fischer et al.~\cite{fischer2025faster}, who developed a $\tilde{O}(n^3)$-time 16-approximation for \ccc{}, and asked whether it was possible to improve the approximation factor to 3 with the same runtime.
Before discussing our techniques and contributions in detail, we provide a more in-depth look at prior work on \cc{} and \ccc{}.

\subsection{Prior work}
\begin{figure}[t]
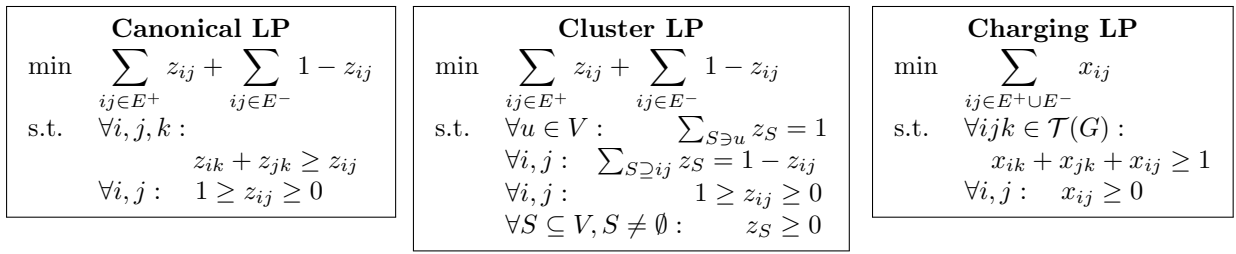

	\centering
	{
		\begin{minipage}[t]{.3\linewidth}
			\begin{align*}
				\boxed{
					\begin{array}{ll}
						\multicolumn{2}{c}{\textbf{Canonical LP}} \\
						\min & {\displaystyle \sum_{ij \in E^+} z_{ij} + \sum_{ij \in E^-} 1 - z_{ij}}  \\
						\text{s.t.} & \forall i,j,k: \\
						&\quad\quad \quad\;\;  z_{ik} + z_{jk} \geq z_{ij}  \\
						& \forall i,j: \quad 1 \geq z_{ij} \geq 0 
					\end{array}
				}
			\end{align*}
		\end{minipage}
		\hfill
		\begin{minipage}[t]{.3\linewidth}
			\begin{align*}
				\boxed{
					\begin{array}{ll}
						\multicolumn{2}{c}{\textbf{Cluster LP}} \\
						\min & {\displaystyle \sum_{ij \in E^+} z_{ij} + \sum_{ij \in E^-} 1 - z_{ij}}  \\
						\text{s.t.} & \forall u \in V:  \hspace{25pt} \sum_{S \ni u} z_S = 1  \\
						& \forall i,j:\hspace{8pt}\sum_{S \supseteq ij} z_S = 1-z_{ij}   \\
						& \forall i,j: \hspace{45pt} 1 \geq z_{ij} \geq 0\\
						&  \forall S\subseteq V, S \neq \emptyset:  \hspace{21pt}  z_{S} \geq 0 
					\end{array}
				}
			\end{align*}
		\end{minipage}
		\hfill
		\begin{minipage}[t]{.3\linewidth}
			\begin{align*}
				\boxed{
					\begin{array}{ll}
						\multicolumn{2}{c}{\textbf{Charging LP}} \\
						\min & {\displaystyle \sum_{ij \in E^+ \cup E^-} x_{ij}}  \\
						\text{s.t.} & \forall ijk \in \badtri{}(G): \\
						&\quad x_{ik} + x_{jk} + x_{ij} \geq 1\\
						& \forall i,j: \quad x_{ij} \geq 0 
					\end{array}
				}
			\end{align*}
		\end{minipage}
	}
	
	\caption{Three LP relaxations for \ccfull{}. $\badtri{}(G)$ represents the set of \textit{bad triangles} in $G = (V,E^+, E^-)$: node triplets $\{i,j,k\}$ where two edges are positive and one edge is negative. Algorithms that rely on solving and rounding variants of the Charging LP tend to be particularly fast and simple.
	}
	\label{fig:lps}
\end{figure}

We first review key algorithmic techniques for standard \cc{} that rely on three different types of linear programming (LP) relaxations for \cc{}, shown in Figure~\ref{fig:lps}. Variants of these LPs have also been very influential in the development of \ccc{} algorithms.

\textbf{Best approximation factors using expensive LPs.}
\ccfull{} was introduced by Bansal et al.~\cite{bansal2004correlation}, who proved NP-hardness and gave a constant-factor approximation. The approximation was later reduced to 4 by rounding the natural Canonical LP (see Figure~\ref{fig:lps}, left)~\cite{CharikarGuruswamiWirth2005}.
The best factor was then improved to 2.5~\cite{AilonCharikarNewman2008} and then 2.06~\cite{ChawlaMakarychevSchrammEtAl2015} by rounding the same LP. 
Recently, a sequence of breakthrough results using more expensive relaxations led to approximation factors below 2~\cite{cohen2022correlation,cohen2023handling,cao2024understanding}, culminating in a $(1.437+\varepsilon)$-approximation. These results can all be described as rounding schemes for a more expensive {Cluster} LP relaxation (Figure~\ref{fig:lps}, middle). The Cluster LP has an exponential number of variables, but prior work has shown how to approximate it in polynomial time~\cite{cao2024understanding} or even sublinear time~\cite{cao2025solving}.
Though significant from a theoretical perspective, algorithms relying on these LP relaxations exhibit disadvantages in terms of simplicity and speed. The Canonical LP has $\Theta(n^3)$ constraints, and takes $\Omega(n^{3\omega})$ time to solve (which is $\Omega(n^{7.11})$ using the current best bound for $\omega$) even using the best theoretical LP solvers.  There is some prior work on practical solvers for the Canonical LP~\cite{veldt2019metric,ruggles2020parallel,sonthalia2022project}, but these have only scaled to graphs with under two hundred thousand nodes. Meanwhile, the sublinear time approximation algorithm for the Cluster LP~\cite{cao2025solving} is an exciting breakthrough result, but is complicated and has a runtime that depends exponentially on a high-degree polynomial in terms of $(1/\varepsilon)$, for a small $\varepsilon > 0$. To our knowledge, no one has attempted to implement any approximate solvers for the Cluster LP.

\textbf{Fast and simple algorithms using the Charging LP.} A significant amount of prior work focuses on faster and simpler algorithms for \cc{} and its variants, at the expense of worse approximation factors~\cite{AilonCharikarNewman2008,veldt2022correlation,makarychev2023single,cohen2024combinatorial,balmaseda2024combinatorial,fischer2025faster}. These often rely in some way on the simpler Charging LP (Figure~\ref{fig:lps}, right) for \cc{}.
The best-known example is the celebrated \textsf{Pivot} algorithm~\cite{AilonCharikarNewman2008}, which obtains an expected 3-approximation for \cc{} by repeatedly selecting a random \emph{pivot} node and clustering it with all its positive neighbors. 
\textsf{Pivot} never solves the Charging LP explicitly, but Ailon et al.~\cite{AilonCharikarNewman2008} proved that the expected cost of this algorithm is within a factor 3 of a feasible solution for the dual of this LP. 

The Charging LP is a covering LP, meaning that it has the form $\min_{x} c^Tx \text{ s.t. } Ax \geq b, x\geq 0$ where $A$, $b$, and $c$ all have nonnegative entries. This special structure comes with many benefits in terms of speed and simplicity.
Fischer et al.~\cite{fischer2025faster} recently used specialized solvers for covering LPs to quickly find approximate solutions for the Charging LP. They used these solvers to develop a {derandomized} version of \textsf{Pivot} that achieves a $(3+\varepsilon)$-approximation for \cc{} in $\tilde{O}(n^3)$ time. There are even simpler and faster techniques for finding a \textit{binary} feasible solution that 3-approximates this LP, by finding a maximal variable-disjoint set of constraints. This has led to extremely fast and simple constant-factor approximations for CC and certain weighted variants~\cite{veldt2022correlation,bengali2023faster,balmaseda2024combinatorial}, which come with simple implementations that scale easily to graphs with millions of nodes. The Charging LP was also used recently to develop the first poly-logarithmic depth parallel algorithm for \cc{} with an approximation factor under 3~\cite{cao2024breaking}.

\textbf{\ccc{}.} Progress on \ccc{} closely mirrors the progression of results for \cc{}. The problem was first studied by van Zuylen and Williamson~\cite{vanzuylen2009deterministic}. These authors developed a 3-approximation by solving and rounding a constrained Canonical LP with constraints $z_{ij} = 0$ for $ij \in \friendly$ and $z_{ij} = 1$ for $ij \in \hostile$. Recently, Kalavas et al.~\cite{kalavas2025towards} considered a similarly constrained variant of the Cluster LP, and showed how to round an approximately optimal LP solution into a better-than-2 approximation for \ccc{}. However, this work was unable to adapt poly-time approximate solvers for the Cluster LP~\cite{cao2024understanding,cao2025solving} to the constrained Cluster LP. Hence, the best approximation factor with a polynomial runtime for \ccc{} is still 3. 
Successfully adapting approximate solvers for the Cluster LP to the constrained case would lead to an improved poly-time approximation factor for \ccc{}. 
However, this approach would exhibit the same disadvantages (with respect to simplicity and implementation) associated with solving the standard Cluster LP. 

Recently, Fischer et al.~\cite{fischer2025faster} provided a simple approach for converting the input graph $G$ into a new auxiliary graph $\auxgraph$, such that running an $\alpha$-approximate \textsf{Pivot} algorithm on $\auxgraph$ produces an $O(\alpha)$-approximation for the original \ccc{} problem on $G$. They then applied their deterministic $(3+\varepsilon)$-approximate \textsf{Pivot} algorithm (for unconstrained \cc{}) to $\auxgraph$. The overall procedure takes only $\tilde{O}(n^3)$-time, but has an approximation factor of 16. The authors therefore posed the following question:
\begin{tcolorbox}[width=\linewidth,boxsep=3pt,left=2pt,right=1pt,top=1pt,bottom=1pt]
	\textbf{Open Question}~\cite{fischer2025faster}. \textit{Can we improve the approximation factor of \cccfull{} from 16
		to 3 while keeping the running time at $\tilde{O}(n^3)$?}
\end{tcolorbox}
A closely related question is whether we can obtain a 3-approximation algorithm by rounding some type of covering LP with $O(n^3)$ constraints, as prior algorithms for variants of \cc{} that apply this strategy tend to be particularly simple and easy to implement.

\subsection{Our approach and contributions}
This paper settles the open question of Fischer~\cite{fischer2025faster} by presenting a $\tilde{O}(\varepsilon^{-3} n^3)$-time $(3+\varepsilon)$-approximation algorithm for \ccc{}. Our algorithm is based on rounding a new covering LP, and therefore enjoys many benefits in terms of simplicity. Our covering LP generalizes the Charging LP in Figure~\ref{fig:lps} for standard \cc{} by incorporating two new types of constraints. The first type ensures that if two edges are forced to be clustered in the same way because of constraints $\friendly \cup \hostile$, then their LP variables are equal. The second special type of constraint lower bounds LP variables for certain triplets of positive edges where every feasible solution is guaranteed to make at least one mistake. We refer to these as \emph{Hostile Edge And Positives} (\heap{}) constraints; see Section~\ref{sec:alg} for a formal definition. We show that enforcing $O(n^3)$ such constraints is sufficient for our analysis.

After approximately solving our covering LP relaxation, we use its output 
to construct an auxiliary graph $\auxgraph$. We show that applying \textsf{Pivot} to $\auxgraph$ is sufficient to obtain the $(3+\varepsilon)$-approximation. Our algorithm can be made completely combinatorial and deterministic. This approach shares similarities with the work of Fischer et al.~\cite{fischer2025faster}, but we consider a different type of covering LP, apply a different auxiliary graph construction, and then provide a more in-depth approximation analysis to achieve the tighter approximation.

As a final contribution, we present substantially simplified algorithms for instances that only involve friendly constraints (\fcc) or hostile constraints (\hcc). For the latter, we develop a $(3+\varepsilon)$-approximation by rounding a simpler covering LP without \heap{} constraints. For \hcc{}, we do not even need to solve a covering LP, although one is considered implicitly in the analysis. Instead, we construct an auxiliary graph based on information from an edge-disjoint set of \textit{dangerous} triangles (involving two positive edges and a hostile edge), and then run a randomized \textsf{Pivot} algorithm on it. This results in a (randomized) $3$-approximate algorithm that runs in $O(n^3)$ time; note that there is no dependence on a constant $\varepsilon > 0$ nor logarithmic factors in the runtime. 
\section{Technical Preliminaries for \ccc{}} 
We begin with technical preliminaries for \ccc{}, which includes several useful concepts for approximating this problem that were developed by van Zuylen and Williamson~\cite{vanzuylen2009deterministic} and by Fischer et al.~\cite{fischer2025faster}.

\subsection{\ccc{} concepts and terminology}
Let $G = (V,E^+, E^-, \friendly, \hostile)$ be an instance of \ccc{} with $n = |V|$. We denote an edge between two nodes $u$ and $v$ simply by $uv$ or $vu$. Every pair of distinct nodes defines either a positive or negative edge. For a clustering $\mathcal{C}$, let $\mathcal{C}(u)$ denote the cluster assignment for node $u$. The cost of $\mathcal{C}$ on graph $G$ is given by:
\begin{equation*}
	\text{cost}_G(\mathcal{C})  = \sum_{uv \in E^+} \mathbbm{1}( \mathcal{C}(u) \neq \mathcal{C}(v)) + \sum_{uv \in E^-} \mathbbm{1}( \mathcal{C}(u) = \mathcal{C}(v)).
\end{equation*}
We refer to $\friendly \cup \hostile$ as {constrained} edges. The clustering $\mathcal{C}$ is \textit{feasible} if it respects all constraints, meaning $\mathcal{C}(u) = \mathcal{C}(v)$ if $uv \in \friendly$ and $\mathcal{C}(u) \neq \mathcal{C}(v)$ if $uv \in \hostile$. We assume throughout the manuscript that there is at least one feasible clustering. One can quickly check that a feasible clustering exists by finding connected components in the graph $G_{\friendly} = (V, \friendly)$ and ensuring that no hostile edge is inside any component.

\textbf{Supernodes and superedges.} Let $\text{comp}(G_{\friendly})$ denote the number of connected components in $G_{\friendly} = (V,\friendly)$ and $s \colon V \rightarrow \{1,2, \hdots, \text{comp}(G_{\friendly})\}$ map each node in $V$ to the index of the component it belongs to in $G_{\friendly}$. If $s(u) = s(v)$, then $u$ and $v$ must be clustered together. We refer to a collection of nodes with the same label under $s$ as a \emph{supernode}, and refer to $s$ as the supernode indicator function. We denote the supernode of $u \in V$ by:
\begin{equation}
	\label{eq:supernode}
	S(u) = \{v \in V \colon s(v) = s(u) \}.
\end{equation}
Let $\mathcal{S}$ be the collection of supernodes of $G = (V,E^+, E^-, \friendly, \hostile)$. 
We refer to two supernodes $(A,B) \in {\mathcal{S} \choose 2}$ as a \emph{superedge}. Let $E^+(A,B)$ and $E^-(A,B)$ denote positive and negative edges in the superedge, respectively. A superedge $(A,B)$ is \textit{hostile}, denoted by $(A,B) \in \hostile$, if there exists $ab \in A \times B$ such that $ab \in \hostile$.

\subsection{Consistent form}
\label{sec:consistent} 
We focus on instances of \ccc{} that satisfy the following definition.
\begin{definition}[Consistent form]
	\label{def:consistent}
	$G = (V,E^+, E^-, \friendly, \hostile)$ is in {consistent} form if $uv \in E^+$ whenever $s(u) = s(v)$, and $uv \in E^-$ whenever $(S(u), S(v))\in \hostile$.
\end{definition}
If we instead start with an instance that is not in consistent form, we can convert it to consistent form in a way that does not affect approximation guarantees or our asymptotic runtime bound.
\begin{definition}
	\label{def:consmap}
	Let $G_0 = (V,E^+_0, E^-_0, \friendly,\hostile)$ be an instance of \ccc{} that is not necessarily in consistent form. Let $s$ be its supernode indicator function, and $\sigma_0(uv) \in \{+,-\}$ denote the sign of edge $uv$ in $G_0$. The \textit{consistent form} of $G_0$ is a new instance $G = (V, {E}^+, {E}^-, {\friendly}, {\hostile})$ where ${E}^+ = \{uv \colon {\sigma}(uv) = +\}$ and ${E}^- = \{uv \colon {\sigma}(uv) = -\}$ for the edge sign function
	\begin{align*}
		{\sigma}(uv) 
		&= 
		\begin{cases}
			+ & \text{if $s(u) = s(v)$} \\
			- & \text{if $(S(u), S(v)) \in \hostile$} \\
			\sigma_0(uv) & \text{otherwise}.
		\end{cases}
	\end{align*}
\end{definition}
Converting $G_0$ to $G$ takes $O(n^2)$ time. Because this conversion does not change the number of nodes, a $\tilde{O}(n^3)$-time algorithm for clustering $G$ also implies a $\tilde{O}(n^3)$-time algorithm for clustering $G_0$. 
\begin{lemma}
	\label{lem:consistent}
	If  $G = (V,E^+, E^-, \friendly,\hostile)$ is the consistent form of $G_0 = (V,E^+_0, E^-_0, \friendly,\hostile)$, then for every $\alpha \geq 1$, an $\alpha$-approximate clustering for $G$ is an $\alpha$-approximate clustering for $G_0$. 
\end{lemma}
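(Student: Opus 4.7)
The plan is to show that, restricted to feasible clusterings, the cost functions on $G$ and $G_0$ differ only by an \emph{additive constant}, from which an $\alpha$-approximation transfers for free whenever $\alpha \geq 1$. First I would identify precisely which edges change sign between $G_0$ and $G$: by Definition~\ref{def:consmap}, the sign of $uv$ differs only in two mutually exclusive cases---(i) $s(u)=s(v)$ and $uv\in E_0^-$, in which case $uv$ becomes positive in $G$; and (ii) $(S(u),S(v))\in\hostile$ and $uv\in E_0^+$, in which case $uv$ becomes negative in $G$. Disjointness holds because $s(u)=s(v)$ implies $S(u)=S(v)$, which is not a superedge and therefore cannot lie in $\hostile$. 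Let $N$ be the total number of such changed edges; this count depends only on $G_0$ and $\friendly\cup\hostile$, not on any clustering.

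Next I would check that every changed edge is \emph{always} a mistake in $G_0$ and \emph{never} a mistake in $G$, for any feasible clustering $\mathcal{C}$. In case (i), $s(u)=s(v)$ forces $\mathcal{C}(u)=\mathcal{C}(v)$, so the edge contributes $1$ to $\text{cost}_{G_0}(\mathcal{C})$ (negative edge inside a cluster) and $0$ to $\text{cost}_G(\mathcal{C})$ (positive edge inside a cluster). Case (ii) is symmetric: the hostile superedge forces $\mathcal{C}(u)\neq\mathcal{C}(v)$, making the $G_0$-edge a positive mistake and the $G$-edge non-mistake. Since every other edge retains its sign and contributes identically to both costs, the identity
\[
\text{cost}_{G_0}(\mathcal{C}) \;=\; \text{cost}_G(\mathcal{C}) + N
\]
holds for every feasible $\mathcal{C}$. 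Because feasibility depends only on $\friendly\cup\hostile$, which is unchanged, the set of feasible clusterings is the same for both instances, and taking the minimum over this common set yields $\text{OPT}_{G_0}=\text{OPT}_G+N$.

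The conclusion then follows by a one-line calculation: if $\mathcal{C}$ is a feasible clustering with $\text{cost}_G(\mathcal{C}) \leq \alpha\cdot\text{OPT}_G$, then
\[
\text{cost}_{G_0}(\mathcal{C}) \;=\; \text{cost}_G(\mathcal{C}) + N \;\leq\; \alpha\cdot\text{OPT}_G + N \;\leq\; \alpha\bigl(\text{OPT}_G + N\bigr) \;=\; \alpha\cdot\text{OPT}_{G_0},
\]
where the second inequality uses $\alpha\geq 1$ and $N\geq 0$. There is no real obstacle in this proof; the only things that require a brief verification are the disjointness of cases (i) and (ii) and the invariance of the feasible set across the two instances, both of which are immediate.
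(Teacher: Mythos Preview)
Your proposal is correct and follows essentially the same argument as the paper: the paper defines the sets $M^-$ and $M^+$ (your cases (i) and (ii)), observes that every feasible clustering makes a mistake at each edge of $M^+\cup M^-$ in $G_0$ and never in $G$, and then uses the resulting additive shift $|M^+\cup M^-|$ (your $N$) together with $\alpha\geq 1$ to transfer the approximation guarantee. Your explicit identity $\text{cost}_{G_0}(\mathcal{C})=\text{cost}_G(\mathcal{C})+N$ is exactly the content of the paper's decomposition into $\text{APP}_X$ and $|M^+\cup M^-|$.
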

\begin{proof}
	Let $M^- = \{uv \colon s(u) = s(v), uv \in E_0^-\}$ be node pairs that share a supernode but are negative edges in $G_0$. Let $M^+ = \{uv \colon (S(u), S(v)) \in \hostile, uv \in E^+_0\}$ be node pairs that have hostile supernodes but are positive edges in $G_0$. For the \ccc{} objective on $G_0$, every feasible clustering will make a mistake at all edges in $M^+ \cup M^-$. For the \ccc{} objective on $G$, no feasible clustering will make a mistake at these edges, since by construction $M^- \subseteq {E}^+$ and $M^+ \subseteq {E}^-$. 
	
	Let $X = (E^+_0 \cup E^-_0) \setminus (M^+ \cup M^-)$ denote the set of all edges that share the same sign in $G_0$ and $G$. Observe that $G_0$ and $G$ will share the same set of optimal solutions for \ccc{}, since they only differ in terms of edges that are forced to be clustered in a certain way because of $\friendly \cup \hostile$. Let $\mathcal{C}^*$ be an optimal clustering for $G_0$ and ${G}$, and $\mathcal{C}$ be an $\alpha$-approximate clustering for ${G}$. Let $\text{OPT}_X$ be the number of mistakes at edges in $X$ that $\mathcal{C}^*$ makes, and $\text{APP}_X$ be the number of mistakes that $\mathcal{C}$ makes on these edges. In ${G}$, neither clustering makes any mistakes at edges in $M^+ \cup M^-$, so the fact that $\mathcal{C}$ is an $\alpha$-approximation for $G$ implies
	\begin{equation*}
		\text{cost}_{{G}}(\mathcal{C}) = \text{APP}_X \leq \alpha \text{OPT}_X = \alpha\text{cost}_{{G}}(\mathcal{C}^*).
	\end{equation*}
	In $G_0$, both clusterings make mistakes at all edges in $M^+ \cup M^-$, so the cost of $\mathcal{C}$ in $G_0$ is
	\begin{align*}
		\text{cost}_{G_0}(\mathcal{C}) = \text{APP}_X  + |M^+ \cup M^-| \leq \alpha(\text{OPT}_X + |M^+ \cup M^-|) = \alpha \text{cost}_{G_0}(\mathcal{C}^*). 
	\end{align*}
	Hence, $\mathcal{C}$ is also an $\alpha$-approximate clustering for $G_0$.
\end{proof}
The rest of the manuscript focuses on instances of \ccc{} in consistent form. This consistency assumption also follows the original presentation of \ccc{} 
(see Section 4 in~\cite{vanzuylen2009deterministic}).

\subsection{Bad triangles and dangerous pairs.}
A triplet of distinct nodes $\{a,b,c\}$ is  a \emph{bad triangle} 
if it contains exactly two positive edges and one negative edge. Let $\badtri(G)$ denote the set of bad triangles in a signed graph $G = (V,E^+, E^-)$ (Figure~\ref{fig:badtri-dangerous}, top). We write $abc \in \badtri(G)$ to indicate a bad triangle that is \emph{centered} at $b$, meaning that $ab \in E^+$, $bc \in E^+$, and $ac \in E^-$. 
\begin{wrapfigure}{r}{1.5in}
	\centering
	\includegraphics[width=1.0in]{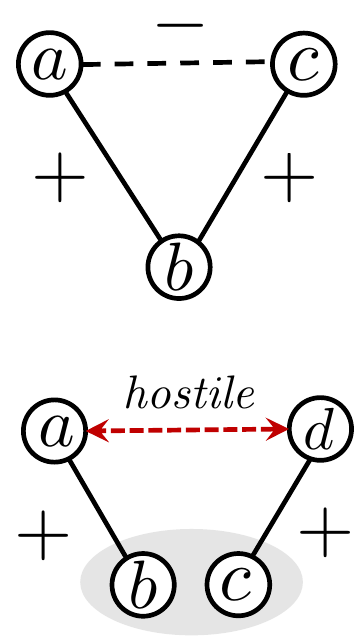}
	\caption{Bad triangle and dangerous pair.}
	\label{fig:badtri-dangerous}
\end{wrapfigure} 
Every clustering must make at least one mistake at each bad triangle.

For \ccc{}, a \emph{dangerous pair} is a pair of positive edges $(ab, cd) \in E^+ \times E^+$ such that $s(b) = s(c)$ and $(S(a), S(d)) \in \hostile$. We refer to this as \textit{dangerous} since applying a pivot procedure to a graph that contains a dangerous pair could result in a hostile edge constraint violation. Every feasible clustering must make a mistake at at least one of these edges. See Figure~\ref{fig:badtri-dangerous}; the gray oval indicates nodes belonging to the same supernode, and a red dashed edge indicates nodes whose supernodes are hostile. If $b = c$, we call this a \emph{dangerous triangle}.

One useful primitive in approximating \ccc{} is computing a maximal edge-disjoint set of dangerous pairs $\mathcal{D}$. By edge-disjoint we mean that each $uv \in E^+$ appears in at most one dangerous pair in $\mathcal{D}$;
 by maximal we mean that we cannot add another dangerous pair to $\mathcal{D}$ without ruining this property. We write $(ab,cd) \in \mathcal{D}$ to indicate a dangerous pair where $s(b) = s(c)$ and $(S(a),S(d)) \in \hostile$. Let
$E_\mathcal{D} = \{ab \in E^+ \colon \exists cd \text{ s.t. } (ab,cd) \in \mathcal{D} \}$.
Since $G$ is in consistent form, $uv \in E_\mathcal{D} \implies s(u) \neq s(v)$. Define $\rho \colon E_\mathcal{D} \rightarrow E_\mathcal{D}$ to be a mapping from each edge $uv \in E_\mathcal{D}$ to its \emph{partner} edge $\rho(uv)$ that satisfies $(uv, \rho(uv)) \in \mathcal{D}$. 

Algorithm~\ref{alg:computeD} is an $O(n^3)$-time algorithm for computing such a set $\mathcal{D}$. This algorithm mirrors a step of the 16-approximation of Fischer et al. (see Lemma 22 of~\cite{fischer2025faster_arxiv}).
\begin{algorithm}[t]
	\caption{\textsf{Compute}$\mathcal{D}(G)$}
	\label{alg:computeD}
	\begin{algorithmic}[1]
		\State \textbf{Input}: \ccc{} instance $G = (V,E^+, E^-,\friendly,\hostile)$ in consistent form
		\State \textbf{Output}: maximal edge-disjoint set of dangerous pairs $\mathcal{D}$
		\State $\mathcal{D} \leftarrow \emptyset$; $E_\mathcal{D} \leftarrow \emptyset$
		\For{$ab \in E^+$ s.t.\ $s(a) \neq s(b)$}
		\State $A = S(a); B = S(b)$
		\If{$ab \in E_\mathcal{D}$}
		\State continue
		\EndIf
		\For{$D \in \mathcal{S} \setminus \{A,B\}$}
		\If{$(A,D) \in \hostile$ \textbf{and} $\exists cd \in E^+(B,D) \setminus E_\mathcal{D}$}
		\State $\mathcal{D} \leftarrow \mathcal{D} \cup (ab, cd)$; $E_\mathcal{D} \leftarrow E_\mathcal{D} \cup \{ab, cd\}$
		\State break
		\EndIf
		\If{$(B,D) \in \hostile$ \textbf{and} $\exists  cd \in E^+(A,D) \setminus E_\mathcal{D}$}
		\State $\mathcal{D} \leftarrow \mathcal{D} \cup (ab, cd)$; $E_\mathcal{D} \leftarrow E_\mathcal{D} \cup \{ab, cd\}$
		\State break
		\EndIf
		\EndFor
		\EndFor
	\end{algorithmic}
\end{algorithm}
It iterates through each $ab \in E^+$ where $s(a) \neq s(b)$, and then iterates through every supernode $D \in \mathcal{S}$. If $D$ is hostile to one of the supernodes $\{S(a),S(b)\}$ but shares a positive edge with the other supernode in $\{S(a), S(b)\}$, the algorithm extracts a dangerous pair involving $ab$ and adds it to $\mathcal{D}$. The resulting set $\mathcal{D}$ is edge-disjoint, since lines 9 and 12 only execute if $ab$ and $cd$ have not yet been added to $E_\mathcal{D}$.
$\mathcal{D}$ is also 
maximal since the algorithm iterates through each $ab \in E^+$ and checks all possible places where a dangerous pair involving $ab$ can be found.

\subsection{Pivoting in an auxiliary graph.}
An existing strategy for approximating \ccc{} is to convert $G$ into an auxiliary graph $\auxgraph$ with a special structure and apply \textsf{Pivot} (Algorithm~\ref{alg:pivot}) to $\auxgraph$~\cite{vanzuylen2009deterministic,fischer2025faster}.
In order for this to work, we must ensure that pivoting in $\auxgraph$ never violates the constraints $\friendly \cup \hostile$ for the original instance $G$. Towards this end, we present the notion of a \emph{pivot-safe} auxiliary graph.
\begin{definition}[Pivot-safe]
	\label{def:pivotsafe}
	Let $G = (V,E^+, E^-, \friendly, \hostile)$ be an instance of \ccc{} with supernode indicator function $s$. An auxiliary graph $\auxgraph = (V, \auxe^+, \auxe^-)$ is {pivot-safe} for $G$ if
	\begin{enumerate}
		\item $s(i) = s(j) \implies $ $ij \in \auxe^+$ and $i$ and $j$ have the same set of positive and negative neighbors in $\auxgraph$. 
		\item $(S(i),S(j)) \in \hostile \implies ij \in \auxe^-$ and there exists no $k$ such that $ik \in  \auxe^+$, $jk \in  \auxe^+$.
	\end{enumerate}
\end{definition}
The first property in Definition~\ref{def:pivotsafe} ensures we do not violate edges in $\friendly$ when applying \textsf{Pivot} to $\auxgraph$. The second property ensures we do not violate edges in $\hostile$. Similar notions of pivot-safe graphs appear in work of van Zuylen and Williamson~\cite{vanzuylen2009deterministic}. Definition~\ref{def:pivotsafe} specifically incorporates the notion of supernodes, which will be convenient for our analysis.

\begin{algorithm}[t]
	\caption{\textsf{Pivot}$(\auxgraph = (V,\auxe^+,\auxe^-))$}
	\label{alg:pivot}
	\begin{algorithmic}[1]
		\State  $\mathcal{C} \gets \emptyset$
		\State $V_1 \gets V$
		\State $i \gets 1$
		\While{$|V_i| > 0$}
		\State  $\auxe^+_i \leftarrow \auxe^+ \cap (V_i \times V_i)$ 
		\State $\auxe^-_i \leftarrow \auxe^- \cap (V_i \times V_i)$ 
		\State $\auxgraph_i \leftarrow (V_i, \auxe^+_i, \auxe^-_i)$
		\State $p = \textsf{ChoosePivot}(\auxgraph_i)$		 
		\State $C_p \gets p \cup \{v\in V_i \colon vp \in \auxe^+_i\}$ \hfill\texttt{// find positive  neighbors}
		\State $\mathcal{C} \gets \mathcal{C} \cup \{C_p\}$  \hfill\texttt{// define new cluster}
		\State $i \gets i + 1$
		\State $V_i \leftarrow V_{i-1} \setminus C_p$		\hfill\texttt{// remove clustered nodes}
		\EndWhile
		\State Return clustering $\mathcal{C}$
	\end{algorithmic}
\end{algorithm}

In Algorithm~\ref{alg:pivot}, the subroutine $\textsf{ChoosePivot}$ represents some strategy for selecting pivots, which could be random (e.g., uniform random pivots) or deterministic. A key contribution of van Zuylen and Williamson was to show deterministic pivoting strategies that led to approximation guarantees for \ccc{} and other variants of \cc{}~\cite{vanzuylen2009deterministic}. We will make use of the following deterministic pivoting strategy and runtime bound for clustering a graph $G = (V,E^+,E^-)$ by applying \textsf{Pivot} to auxiliary graph $\auxgraph = (V,\auxe^+,\auxe^-)$.
\begin{lemma}[Lemma 3.1~\cite{vanzuylen2009deterministic}]
	\label{lem:3pt1}
	\textsf{Pivot} (Algorithm~\ref{alg:pivot}) can be implemented in $O(n^3)$ if in iteration $i$ the pivot node $p$ is chosen to minimize the ratio 
	\begin{equation}
		\label{eq:ratio1}
		\frac{|\badtri^+_p(\auxgraph_i) \cap E^+| + |\badtri^-_p(\auxgraph_i) \cap E^-| }{ \sum_{uv \in \badtri^+_p(\auxgraph_i) \cup \badtri^-_p(\auxgraph_i)} y_{uv}},
	\end{equation}
	where $\{y_{uv} \colon uv \in E^+ \cup E^- \}$ is a set of budgets for edges and where
	\begin{align*}
		\badtri^+_p(\auxgraph_i)  = \{uv \in \auxe^+_i \colon pu \in \auxe_i^+, pv \in \auxe^-_i\} \text{ and } \badtri^-_p(\auxgraph_i)  = \{uv \in \auxe^-_i \colon pu \in \auxe_i^+, pv \in \auxe^+_i\}.
		\end{align*}
\end{lemma}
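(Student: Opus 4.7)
The plan is to give an $O(n^3)$ implementation by maintaining, for each remaining candidate pivot in $V_i$, three numeric counters that together encode the ratio in \eqref{eq:ratio1}, and updating them incrementally as nodes are removed from the active set. Specifically, for each $p \in V_i$ I would store
\[
A(p) = |\badtri^+_p(\gi) \cap E^+|, \quad B(p) = |\badtri^-_p(\gi) \cap E^-|, \quad C(p) = \sum_{uv \in \badtri^+_p(\gi) \cup \badtri^-_p(\gi)} y_{uv}.
\]
The ratio to minimize is then $(A(p) + B(p))/C(p)$, so in each iteration we can scan $V_i$ and pick the minimizing pivot in $O(|V_i|)$ time; over the whole algorithm this selection work totals $O(n^2)$.

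Initialization is done once by iterating over all $O(n^3)$ ordered triples $(p,u,v)$ of distinct nodes, testing in $O(1)$ which (if any) of the configurations defining $\badtri^+_p(\auxgraph)$ or $\badtri^-_p(\auxgraph)$ the triple realizes, and incrementing the appropriate counter using the known signs from $E^+, E^-$ and the precomputed budgets $y_{uv}$. This step costs $O(n^3)$ time and $O(n)$ space for counters.

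The crux is the incremental update step. After pivoting at $p$ and forming $C_p$, each $x \in C_p$ leaves the active set, and for every remaining $q$ we must delete the contribution to $A(q), B(q), C(q)$ of every triangle centered in $\badtri^{\pm}_q$ that used $x$ as one of the two non-center vertices. Fixing $q$ and $x$, such triangles are indexed by their third vertex, so there are at most $O(|V_i|)$ of them and they can all be deducted in $O(n)$ time. Summed over all remaining $q$, processing a single removal costs $O(n^2)$; summed over all $n$ removals across the entire while loop, the total update cost is $O(n^3)$.

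The main obstacle I expect is bookkeeping to avoid double-counting when $|C_p| > 1$. If two distinct nodes $x, x' \in C_p$ are both removed in the same iteration, a triangle at a remaining pivot $q$ that used both $x$ and $x'$ must not be subtracted twice. A clean way to handle this is to process the removals from $C_p$ sequentially, refreshing the auxiliary-graph adjacency used in the update routine after each single removal so that every affected triangle is charged to exactly one removal event. Combining initialization, selection, and amortized updates then yields the claimed $O(n^3)$ runtime.
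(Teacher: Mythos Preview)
Your proposal is correct and follows essentially the same strategy as the paper: maintain, for every candidate pivot, the numerator and denominator of \eqref{eq:ratio1} as counters, initialize them in $O(n^3)$ by enumerating all triples, and update them incrementally as nodes leave the active set. The one difference is in the update bookkeeping: the paper stores an explicit list of all bad triangles in $\auxgraph$ together with a node-to-triangle map, so that when a cluster $C_p$ is formed each affected triangle is located and deleted in $O(1)$ and the amortization is over the $O(n^3)$ triangles (each deleted once); you instead do a brute-force $O(n^2)$ scan over pairs $(q,v)$ per removed node $x$ and amortize over the $n$ removals. Your variant avoids the $O(n^3)$ extra space for the triangle list at the cost of doing redundant work on non-bad triples, but both arrive at the same $O(n^3)$ total and your sequential-removal trick correctly handles the double-counting issue.
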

\begin{proof}
	If we ignore the time it takes to compute the quantities appearing in the numerator and denominator of the ratio in~\eqref{eq:ratio1}, finding a pivot $p$ minimizing this ratio takes $O(n)$ time, and forming the cluster $C_p$ takes $O(n)$ time. Since there are at most $n$ iterations, this part of the computation takes $O(n^2)$ time.
	
	To compute the ratio in~\eqref{eq:ratio1} efficiently, we initialize a list of all bad triangles in $\auxgraph$ before choosing the first pivot.
	For each bad triangle we store the nodes $\{a,b,c\}$ in the triangle, the edge budgets $\{y_{ab}, y_{bc}, y_{ac}\}$ for the triangle, and the edge signs in $G$ and $\auxgraph$. We also maintain a map from each node to the bad triangles it belongs to in $\auxgraph$.
	By iterating through all node triplets in $O(n^3)$ time at the start of the algorithm, we can compute and store all quantities in the numerator and denominator for each node.
	
	In iteration $i$, we choose a pivot node $p$ and form a cluster $C_p$. Then, for each bad triangle that overlaps with $C_p$, we must remove that bad triangle and update the quantities in~\eqref{eq:ratio1} for each node in the triangle. Since we maintain a map from each node to the bad triangles it belongs to, this takes constant time per bad triangle. 
	Since we start with $O(n^3)$ bad triangles and each is removed from the list once, the total maintenance time is $O(n^3)$.
\end{proof}

The 3-approximation for \ccc{} developed by van Zuylen and Williamson~\cite{vanzuylen2009deterministic} works by applying the deterministic pivoting strategy in Lemma~\ref{lem:3pt1} with $y_{uv} = z_{uv}$ where $\{z_{uv}\}$ is an optimal set of variables for the constrained version of the Canonical LP. In our work, the budgets will correspond to variables for an alternative covering LP relaxation, or to indicator variables for certain special edges.

\section{The $(3+\varepsilon)$-approximation for \ccc{}}
\label{sec:alg}
To develop a $\tilde{O}(n^3)$-time $(3+\varepsilon)$-approximation for \ccc{}, we present a new covering LP relaxation and show how to round it by applying \textsf{Pivot} to a carefully constructed auxiliary graph $\auxgraph$. 

\subsection{The covering LP and rounding algorithm}
In the Charging LP for standard \cc{} (Figure~\ref{fig:lps}, top right), the constraint $z_{ij} + z_{jk} + z_{ik} \geq 1$ for $ijk \in \badtri(G)$ reflects the fact that every clustering will make at least one mistake at each bad triangle. Our first technical contribution is a generalized covering LP relaxation for \ccc{} that has $O(n^3)$ constraints and can be approximated in $\tilde{O}(n^3)$-time. Our covering LP involves new types of {superedge} constraints and variables, and a new notion of a \heap{} constraint. The \heap{} constraints depend in turn on first computing a maximal edge-disjoint set of dangerous pairs $\mathcal{D}$, which can be done in $O(n^3)$ time using Algorithm~\ref{alg:computeD}.

\textbf{Superedge variables and constraints.}
Our covering LP involves two variables for each pair of supernodes $A,B \in \mathcal{S}$. Variable $X_{AB}^+ \in [0,1]$ can be viewed as a relaxed binary indicator for whether to make mistakes at all edges in $E^{+}(A,B)$. Variable $X_{A,B}^-$ is a relaxed indicator for making mistakes at $E^-(A,B)$.
We add constraint $X_{AB}^+ + X_{AB}^- \geq 1$, since we either make mistakes at all edges in $E^+(A,B)$ or all edges in $E^-(A,B)$. For each triplet of distinct supernodes $\{A,B,C\}$, every clustering makes mistakes at all edges in $E^+(A,B)$, all edges in $E^+(B,C)$, or all edges in $E^-(A,C)$, so we add constraints of the form
\begin{equation*}
	X_{AB}^+ + X_{BC}^+ + X_{AC}^- \geq 1.
\end{equation*}
Our analysis will often consider LP variables for a specific edge $uv$ and the corresponding supernodes $S(u)$ and $S(v)$. For notational convenience, we define $X_{uv}^+ = X_{S(u)S(v)}^+$ and $X_{uv}^- = X_{S(u)S(v)}^-$.

\textbf{\heap{} constraints.} Given $\mathcal{D}$, Algorithm~\ref{alg:heap} is a procedure for identifying certain triplets of positive 
\begin{wrapfigure}{r}{1.75in} 
	\centering
		\vspace{-5pt}
	\includegraphics[width=1.5in]{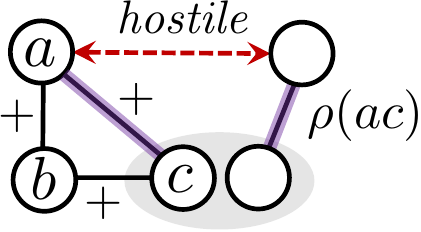}
	\caption{A structure leading to a \heap{} constraint $x_{ab} + x_{bc} + x_{\rho(ac)} \geq 1$.}
	\label{fig:heap}
\end{wrapfigure} 
edges that cannot all be satisfied by a feasible clustering.
This algorithm iterates through each edge $ac \in E_\mathcal{D}$, then iterates through each node $b \notin \{S(a),S(c)\}$.
Consider Figure~\ref{fig:heap} for illustration. 
If $ac \in E_\mathcal{D}$ and $\{ab, bc, ac\}$ is a triangle of positive edges, then every feasible clustering must make a mistake at one of the edges $\{ab, bc, \rho(ac)\}$ to avoid violating one of the constraints in $\friendly \cup \hostile$. We therefore introduce a new LP constraint $x_{ab} + x_{bc} + x_{\rho(ac)} \geq 1$ for this triplet. We refer to this as a \emph{Hostile Edge And Positives} (\heap) constraint. We will need this constraint later to prove a useful bound that will hold if nodes $\{a,b,c\}$ define a bad triangle in the auxiliary graph $\auxgraph$ (see Lemma~\ref{lem:badtri}). Note that there can be at most $O(n^3)$ \heap{} constraints.

\begin{algorithm}[t]
	\caption{\textsf{Find\heap{}s}$(G, \mathcal{D})$}
	\label{alg:heap}
	\begin{algorithmic}[1]
		\State \textbf{Input}: \ccc{} instance $G$ and maximal edge-disjoint set of dangerous pairs $\mathcal{D}$
		\State \textbf{Output}: Set of positive edge triplets defining \heap{} constraints
		\State $\petset \leftarrow \emptyset$
		\For{$ac \in E_\mathcal{D}$}
		\For{$b \in V \setminus \{S(a),S(c)\}$}
		\If{$ab \in E^+$ and $bc \in E^+$}
		\State $\petset \leftarrow \petset \cup \{ab, bc, \rho(ac)\}$
		\EndIf
		\EndFor
		\EndFor
		\State Return $\petset$
	\end{algorithmic}
\end{algorithm}

\textbf{The covering LP relaxation.}
Our new LP relaxation for a \ccc{} instance $G$ is given by
\begin{equation}
	\begin{aligned}
		\label{eq:ccc_lp}
		\min & \displaystyle\sum_{uv \in E^+ \cup E^-} x_{uv}  &\\
		\text{s.t. } & x_{uv} = X_{S(u)S(v)}^+ = X_{uv}^+ & \text{ $\forall  uv \in E^+$}\\
		& x_{uv} = X_{S(u)S(v)}^- = X_{uv}^- & \text{ $\forall uv \in E^-$}\\
				& X_{AA}^+ = 1-X_{AA}^- = 0 & \text{ $\forall  A \in \mathcal{S}$} \\
				& X_{AB}^- = 1-X_{AB}^+ = 0 & \text{ $\forall  (A,B) \in \hostile$} \\
		&\begin{rcases}
			X_{AB}^+ + X_{BC}^+ + X_{AC}^- \geq 1 \\ 
			X_{AB}^+ + X_{BC}^- + X_{AC}^+ \geq 1 \\
			X_{AB}^- + X_{BC}^+ + X_{AC}^+ \geq 1 \\
		\end{rcases} & \text{ $\forall  \{A,B,C\} \in {\mathcal{S} \choose 3}$} \\
		& X_{AB}^+ + X_{AB}^- \geq 1 & \text{$\forall  A,B \in \mathcal{S}$}\\
		& X_{AB}^+ \geq 0, X_{AB}^- \geq 0 & \text{$\forall  A,B \in \mathcal{S}$}	\\
		& x_{ab} + x_{bc} + x_{\rho(ac)} \geq 1 & \text{$\forall  \{ab, bc, \rho(ac)\} \in \petset$.} 
	\end{aligned}
\end{equation}
Theorem~\ref{thm:covering} summarizes two theorems from Fischer et al.~\cite{fischer2025faster}, which are based on previous machinery for covering LPs using combinatorial~\cite{garg1998faster,fleischer2004fast} or non-combinatorial methods~\cite{wang2016unified,allen2019nearly}.
\begin{theorem}[Theorems 13 \& 14 in~\cite{fischer2025faster}]
	\label{thm:covering}
	For every $\varepsilon \in (0,1)$, a $(1+\varepsilon)$-approximate solution for a covering LP with at most $N$ nonzero entries in the constraint matrix can be found in $\tilde{O}(N \varepsilon^{-1})$ time, and can be found in time $\tilde{O}(N \varepsilon^{-3})$ by a combinatorial algorithm.
\end{theorem}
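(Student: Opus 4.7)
The plan is to establish both bounds by invoking the well-developed machinery for approximately solving positive linear programs and verifying that each step fits within the stated budget. Write the covering LP in canonical form $\min c^\top x$ subject to $Ax \geq b$, $x \geq 0$, with $A, b, c \geq 0$; its dual is the packing LP $\max b^\top y$ subject to $A^\top y \leq c$, $y \geq 0$. After an $O(N)$ preprocessing pass that rescales rows so $b = \mathbf{1}$ and records the scale factors in the objective, both regimes will produce a $(1+\varepsilon)$-feasible, $(1+\varepsilon)$-optimal primal together with a weighted-constraint dual witness; any residual infeasibility is removed by a final scalar rescaling of $x$.

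For the non-combinatorial $\tilde{O}(N \varepsilon^{-1})$ bound, I would invoke an accelerated packing/covering solver in the style of Wang-Rao-Mahoney and Allen-Zhu-Orecchia~\cite{wang2016unified,allen2019nearly}. The algorithm smooths the constraint slack via a log-sum-exp (softmax) potential and then runs an accelerated mirror-descent iteration with multiplicative updates that automatically preserve positivity of the iterates. The convergence analysis is width-independent thanks to $A \geq 0$ entrywise, and the outer iteration count is $\tilde{O}(\varepsilon^{-1})$; each iteration does a constant number of sparse matrix-vector products of the form $Ax$ and $A^\top y$, each of cost $O(N)$. Multiplying yields the first claim.

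For the combinatorial $\tilde{O}(N \varepsilon^{-3})$ bound, I would adapt the Garg-Konemann multiplicative-weights scheme specialized to covering LPs by Fleischer~\cite{garg1998faster,fleischer2004fast}. Initialize per-constraint lengths $\ell_i$ to a carefully chosen $\delta = \delta(n, \varepsilon)$; in each iteration (i) use a combinatorial oracle to find a column $j$ minimizing the ratio $c_j / \sum_i A_{ij} \ell_i$, (ii) raise $x_j$ until the tightest active row is saturated, and (iii) apply the multiplicative update $\ell_i \gets \ell_i \cdot (1 + \varepsilon A_{ij} \Delta x_j / b_i)$. A standard potential-function argument bounds the outer iteration count by $\tilde{O}(\varepsilon^{-2})$, and absorbing an extra $\varepsilon^{-1}$ factor to pass from the terminal length vector and an approximate oracle to a rigorously $(1+\varepsilon)$-approximate primal yields the claim.

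The main obstacle is to amortize the oracle in step (i) so that the product with the iteration count lands at $\tilde{O}(N \varepsilon^{-3})$ for the combinatorial scheme (and at $\tilde{O}(N \varepsilon^{-1})$ for the accelerated one), since a naive rescan of all ratios costs $O(N)$ per iteration. The standard remedy is to maintain $\sum_i A_{ij} \ell_i$ for each column $j$ in a priority structure, refreshing only the entries whose lengths were updated in the previous iteration; because the multiplicative updates concentrate on rows incident to the chosen column, the bookkeeping work telescopes to $\tilde{O}(N)$ per logarithmic factor of growth in the total length. Combining this amortization with the outer-iteration bounds of each framework, and verifying that the per-iteration state can be kept consistent with the accelerated updates, delivers the two claimed runtimes.
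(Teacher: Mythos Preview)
The paper does not prove this theorem at all: it is quoted verbatim as a black-box result from Fischer et al.~\cite{fischer2025faster}, who in turn attribute the two bounds to the accelerated positive-LP solvers of~\cite{wang2016unified,allen2019nearly} and to the Garg--K\"onemann/Fleischer multiplicative-weights framework~\cite{garg1998faster,fleischer2004fast}, respectively. Immediately after the statement, the paper simply applies it with $N = O(n^3)$ and moves on.

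Your proposal is therefore not comparable to the paper's ``proof'' in the usual sense, since there is none. That said, your sketch correctly identifies the two underlying hammers and the high-level mechanics (softmax smoothing with accelerated mirror descent for the $\tilde{O}(N\varepsilon^{-1})$ bound; multiplicative-weight length updates with an amortized min-ratio oracle for the $\tilde{O}(N\varepsilon^{-3})$ combinatorial bound). If your goal is merely to justify invoking the theorem, a one-line citation suffices, exactly as the paper does. If your goal is to actually reprove the runtimes, be aware that the amortization argument you gesture at in the final paragraph is the genuinely delicate part of the Fleischer analysis, and your accounting of the $\varepsilon$ factors in the combinatorial case (``absorbing an extra $\varepsilon^{-1}$ factor to pass \ldots to a rigorously $(1+\varepsilon)$-approximate primal'') is vague; the precise breakdown in~\cite{fleischer2004fast} is that the number of phases is $\tilde{O}(\varepsilon^{-1})$ and the work per phase is $\tilde{O}(N\varepsilon^{-2})$, not the $\tilde{O}(\varepsilon^{-2})$ outer iterations you state. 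For the purposes of this paper, though, none of that detail is needed.
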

Because LP~\eqref{eq:ccc_lp} is a covering LP with $O(n^3)$ constraints, and each constraint involves at most 3 variables, we know Theorem~\ref{thm:covering} applies with $N = O(n^3)$. This means we can solve the LP in $O(n^3 \varepsilon^{-1})$ time, or in $O(n^3 \varepsilon^{-3})$ time using a simpler combinatorial approach.

\textbf{Building the auxiliary graph $\auxgraph$.} 
\begin{algorithm}[t]
	\caption{\textsf{BuildAuxGraph}$(G = (V,E^+, E^-,\friendly,\hostile),E_\mathcal{D}, \mathcal{X})$}
	\label{alg:buildghat}
	\begin{algorithmic}[1]
		\State \textbf{Input}: \ccc{} instance $G$, edge set $E_\mathcal{D}$, feasible solution $\mathcal{X}$ for LP~\eqref{eq:ccc_lp}
		\State \textbf{Output}: Pivot-safe auxiliary graph $\auxgraph$
		\State $\auxe^+ \longleftarrow \{uv \colon s(u) = s(v) \}$ \hfill \texttt{//initialize positive edge set} 
		\For{$(A,B) \in {\mathcal{S} \choose 2}$}
		\If{$E^+(A,B) \setminus E_\mathcal{D} \neq \emptyset$ and $X_{AB}^+ < \min\{X_{AB}^-, \frac23\}$} \hfill \texttt{//condition for more positive edges} 
		\State $\auxe^+ \longleftarrow \auxe^+ \cup \{uv \colon u \in A, v \in B\}$ 
		\EndIf
		\EndFor
		\State $\auxe^- = {V \choose 2} \setminus \auxe^+$ \hfill \texttt{//all other edges are negative}
		\State Return $\auxgraph = (V, \auxe^+, \auxe^-)$
	\end{algorithmic}
\end{algorithm}
Let $\mathcal{X} = \{X_{AB}^-, X_{AB}^+, x_{uv} \colon A,B \in \mathcal{S}, uv \in E^+ \cup E^-\}$ denote a set of approximately optimal LP variables for LP~\eqref{eq:ccc_lp}. To construct $\auxgraph = (V,\auxe^+, \auxe^-)$, we first ensure that all edges inside supernodes are included in $\auxe^+$. Then, for each pair of distinct supernodes $(A, B) \in {\mathcal{S} \choose 2}$, we ensure edges in $(A,B)$ are either all positive or all negative in $\auxgraph$, by considering two cases (see Figure~\ref{fig:bigfigure}):
\begin{enumerate}
	\item \textbf{Case 1}: If $E^+(A,B) \subseteq E_\mathcal{D}$, make every edge between $A$ and $B$ negative in $\auxgraph$ (Figure~\ref{fig:intermediate}).
	\item \textbf{Case 2}: If there is an edge between $A$ and $B$ from $E^+ \backslash E_\mathcal{D}$, then:
	\begin{itemize}
		\item If $X_{AB}^+ \geq X_{AB}^-$, make every edge between $A$ and $B$ negative in $\auxgraph$.
		\item If $X_{AB}^- > X_{AB}^+$ and $X_{AB}^+ \geq \frac23$, make every edge between $A$ and $B$ negative in $\auxgraph$.
		\item If $X_{AB}^- > X_{AB}^+$ and $X_{AB}^+ < \frac23$, make every edge between $A$ and $B$ positive in $\auxgraph$.
	\end{itemize}
\end{enumerate}
This construction is captured more succinctly by Algorithm~\ref{alg:buildghat}. This always produces a pivot-safe graph.
\begin{lemma}
	\label{lem:pivotsafe}
	Running \textsf{Pivot} on graph $\auxgraph$ produced by \textsf{BuildAuxGraph} (Algorithm~\ref{alg:buildghat}), with any choice of pivot nodes, produces a feasible clustering for the \ccc{} instance $G = (V,E^+, E^-,\friendly,\hostile)$. 
\end{lemma}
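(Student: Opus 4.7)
The plan is to verify that $\auxgraph$ satisfies both conditions of Definition~\ref{def:pivotsafe} (i.e.\ is pivot-safe), and then observe that pivot-safety translates directly into feasibility of the clustering returned by \textsf{Pivot}. The structure of Algorithm~\ref{alg:buildghat} is convenient: it first declares every intra-supernode edge positive, and then decides the sign of every inter-supernode pair $(A,B)$ uniformly over all nodes in $A\times B$. This uniformity immediately gives the ``same-neighborhood'' part of condition~1 of Definition~\ref{def:pivotsafe}, and the initial step gives $ij\in\auxe^+$ whenever $s(i)=s(j)$. So the friendly case is essentially automatic.

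For the hostile case, I would first argue that if $(S(i),S(j))\in\hostile$, then $ij\in\auxe^-$. Here I use the consistent form assumption: $(S(i),S(j))\in\hostile$ implies $E^+(S(i),S(j))=\emptyset$, so the guard on line~5 of Algorithm~\ref{alg:buildghat} fails for the pair $(S(i),S(j))$, and $ij$ is never promoted to $\auxe^+$ by the loop. Since $s(i)\neq s(j)$ (a feasible clustering exists, so hostile supernodes are distinct), it is also not added in the initialization. Hence $ij\in\auxe^-$.

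The main obstacle is the second half of condition~2: no vertex $k$ satisfies $ik\in\auxe^+$ and $jk\in\auxe^+$. I would handle this by a three-way case analysis on where $k$ sits relative to $S(i)$ and $S(j)$. If $s(k)=s(i)$, then $(S(k),S(j))=(S(i),S(j))\in\hostile$, so consistent form forces $E^+(S(k),S(j))=\emptyset$ and the argument of the previous paragraph applied to the pair $(S(j),S(k))$ shows $jk\in\auxe^-$, contradicting $jk\in\auxe^+$. The case $s(k)=s(j)$ is symmetric. The remaining case, in which $S(i),S(j),S(k)$ are three distinct supernodes, is where I would invoke maximality of $\mathcal{D}$: both $ik\in\auxe^+$ and $jk\in\auxe^+$ must come from the loop, which requires $E^+(S(i),S(k))\setminus E_\mathcal{D}\neq\emptyset$ and $E^+(S(j),S(k))\setminus E_\mathcal{D}\neq\emptyset$. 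But then I could pick $ac\in E^+(S(i),S(k))\setminus E_\mathcal{D}$ and $bc'\in E^+(S(j),S(k))\setminus E_\mathcal{D}$, and since $s(c)=s(c')=s(k)$ and $(S(i),S(j))=(S(a),S(b))\in\hostile$, the pair $(ac,bc')$ is a dangerous pair edge-disjoint from $\mathcal{D}$, contradicting maximality. This is the step I expect to be the crux of the argument.

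Finally, with pivot-safety established, feasibility of the output clustering follows from standard reasoning about \textsf{Pivot}: two nodes end up together exactly when they share a positive edge to a common pivot (or one is the pivot of the other). Condition~1 guarantees that friendly-identified nodes $i,j$ always land in the same cluster (either both attach to the same pivot, or one pivots and the other is a positive neighbor). Condition~2 guarantees that hostile pairs $i,j$ cannot coexist in any cluster $C_p$, because neither can $p\in\{i,j\}$ (since $ij\in\auxe^-$) nor can an external pivot $p$ have both $ip$ and $jp$ positive. Hence every constraint in $\friendly\cup\hostile$ is respected.
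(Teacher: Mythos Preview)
Your proposal is correct and follows essentially the same approach as the paper: verify the two pivot-safe conditions of Definition~\ref{def:pivotsafe}, with the crux being that a common positive $\auxgraph$-neighbor of a hostile pair would yield a dangerous pair outside $E_\mathcal{D}$, contradicting maximality of $\mathcal{D}$. Your version is slightly more detailed---you handle the cases $s(k)=s(i)$ and $s(k)=s(j)$ explicitly, use consistent form rather than the LP constraint $X_{ij}^-=0$ to get $ij\in\auxe^-$, and spell out why pivot-safety yields feasibility---but these are elaborations of the same argument rather than a different route.
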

\begin{proof}
	We show that $\auxgraph$ satisfies the two properties in Definition~\ref{def:pivotsafe}. The first property follows from the fact that all edges inside supernodes are in $\auxe^+$, and edges between an arbitrary pair of supernodes $A$ and $B$ are either all positive or all negative in $\auxgraph$. 
	For the second property in Definition~\ref{def:pivotsafe}, 
	let $i$ and $j$ be two nodes satisfying $(S(i), S(j)) \in \hostile$ and note that $ij \in \auxe^-$ because $X_{ij}^- = 0$.
	Assume towards a contradiction that there exists some $k$ such that $ik \in \auxe^+$ and $jk \in \auxe^+$. Define $I = S(i)$, $J = S(j)$, and $K = S(k)$. From the construction of $\auxgraph$, we know that $ik \in \auxe^+$ implies there exists some edge $e \in E^+(I,K) \setminus E_\mathcal{D}$. Similarly, $jk \in \auxe^+$ implies there exists some $f \in E^+(J,K) \setminus E_\mathcal{D}$. However, this implies that $\{e,f\}$ is a new dangerous pair we can add to $\mathcal{D}$, contradicting the maximality of $\mathcal{D}$.
\end{proof}

\begin{algorithm}[t]
	\caption{\textsf{ConstrainedCoverRound}$(G = (V,E^+, E^-,\friendly,\hostile), \varepsilon)$}
	\label{alg:mainalg}
	\begin{algorithmic}[1]
		\State \textbf{Input}: \ccc{} instance $G$ in consistent form, approximation parameter $\varepsilon \in (0,1)$
		\State \textbf{Output}: feasible clustering for \ccc{}; a $(3+\varepsilon)$-approximation for certain pivot choices
		\State $\mathcal{D} \longleftarrow \textsf{Compute}\mathcal{D}(G)$ 
		\State $\petset \longleftarrow \textsf{Find\heap{}s}(G, \mathcal{D})$
		\State $\mathcal{X} \longleftarrow $ $\big(1+\frac{\varepsilon}{3}\big)$-approximate solution to LP~\eqref{eq:ccc_lp} 
		\State $\auxgraph \longleftarrow \textsf{BuildAuxGraph}(G, E_\mathcal{D},\mathcal{X})$
		\State $\mathcal{C} \longleftarrow \textsf{Pivot}(\auxgraph)$
		\State Return $\mathcal{C}$
	\end{algorithmic}
\end{algorithm}

\begin{figure}[t!]
	\centering
	\begin{subfigure}[b]{0.29\textwidth}
		\centering
		\includegraphics[width=\textwidth]{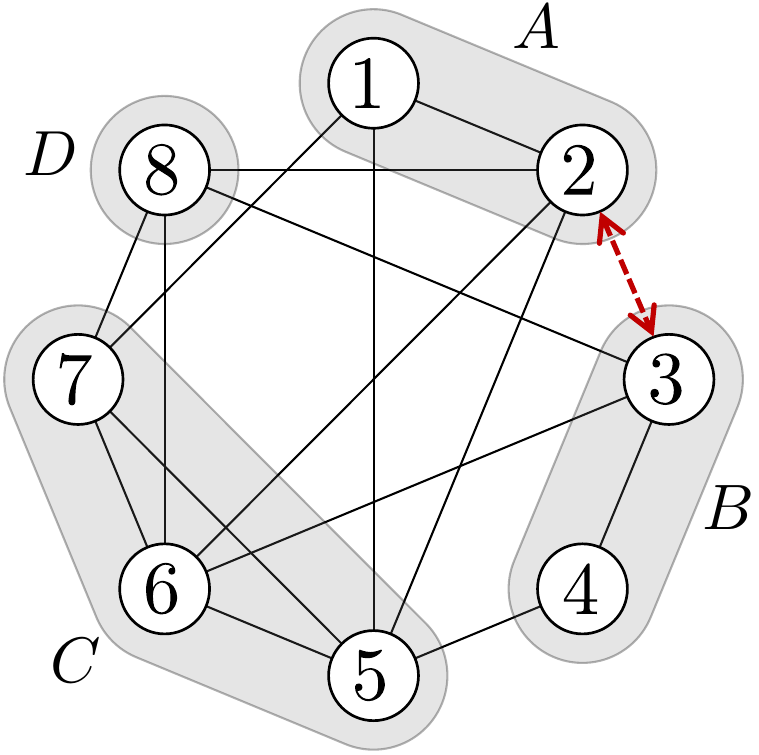}
		\caption{\ccc{} instance $G$}
		\label{fig:instance}
	\end{subfigure}   
	\hfill
	\begin{subfigure}[b]{0.29\textwidth}
		\centering
		\includegraphics[width=\textwidth]{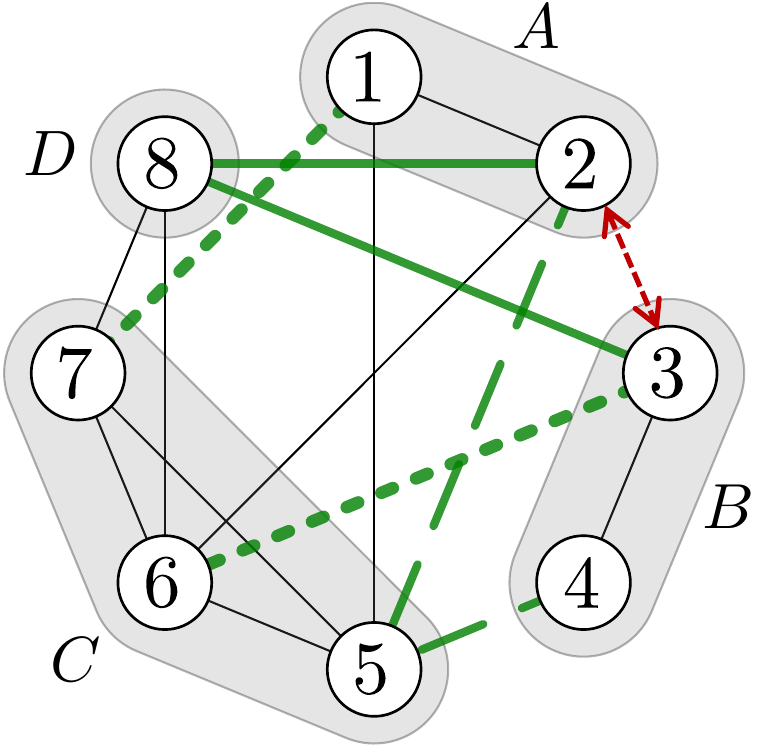}
		\caption{$E_\mathcal{D}$ edges, which come in pairs}
		\label{fig:computeD}
	\end{subfigure}
	\hfill
	\begin{subfigure}[b]{0.34\textwidth}
		\centering
		\includegraphics[width=\textwidth]{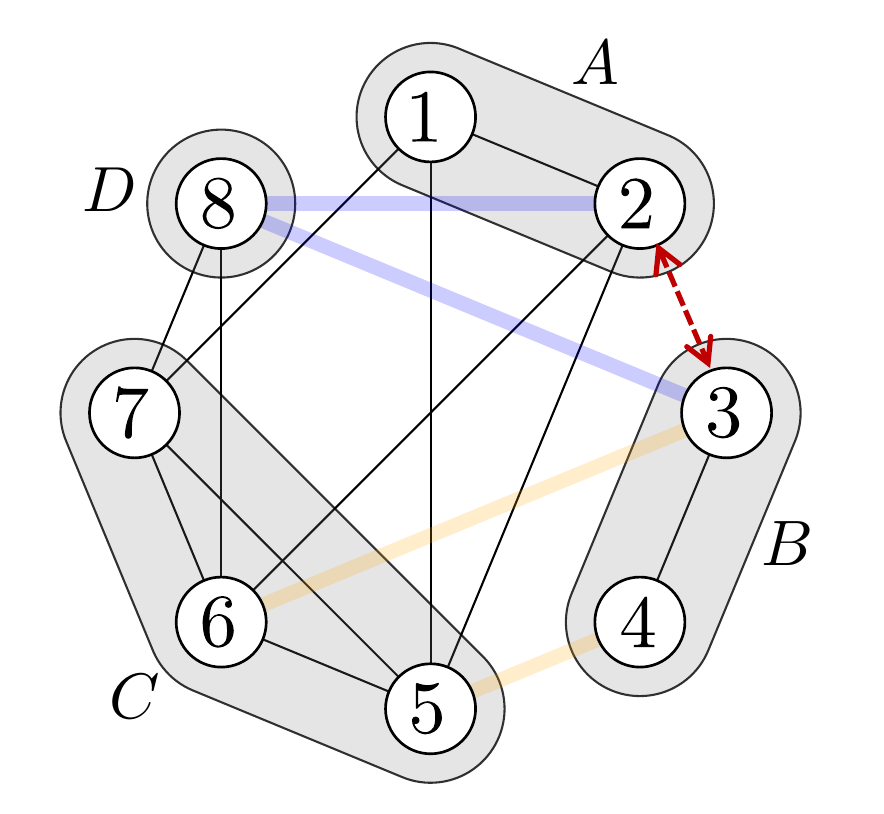}
		\caption{Edges flipped in case 1 of building $\auxgraph$}
		\label{fig:intermediate}
	\end{subfigure}
	\hfill
	\begin{subfigure}[b]{0.32\textwidth}
		\centering
		\includegraphics[width=\textwidth]{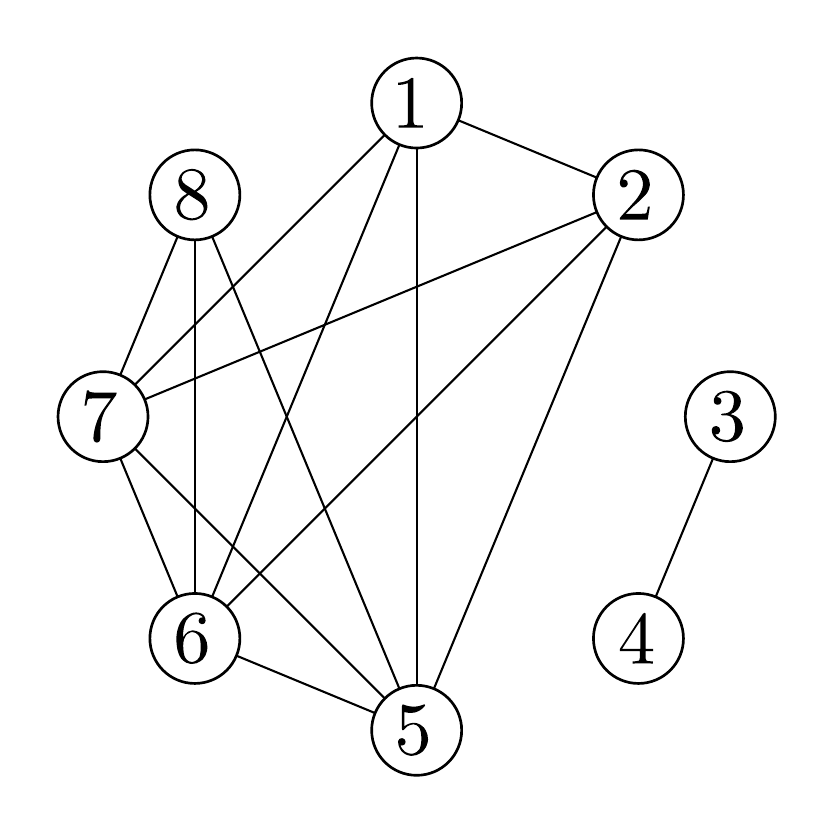}
		\caption{Pivot-safe auxiliary graph $\auxgraph$}
		\label{fig:aux}
	\end{subfigure}
	\hfill
	\begin{subfigure}[b]{0.32\textwidth}
		\centering
		\includegraphics[width=\textwidth]{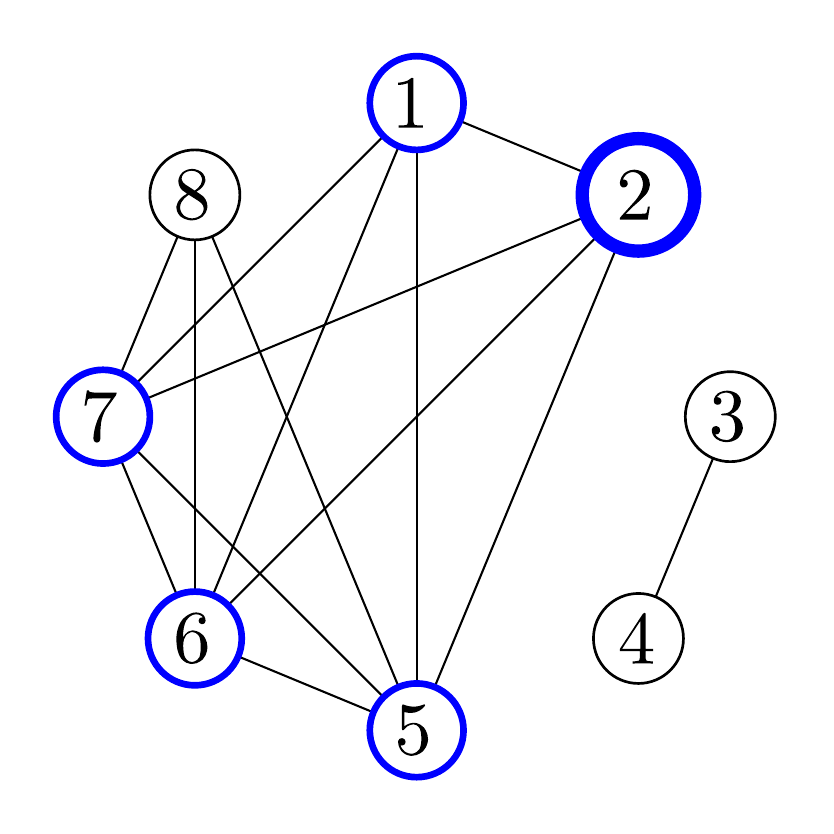}
		\caption{First pivot (node 2) and cluster}
		\label{fig:pivot1}
	\end{subfigure}
	\hfill
	\begin{subfigure}[b]{0.32\textwidth}
		\centering
		\includegraphics[width=\textwidth]{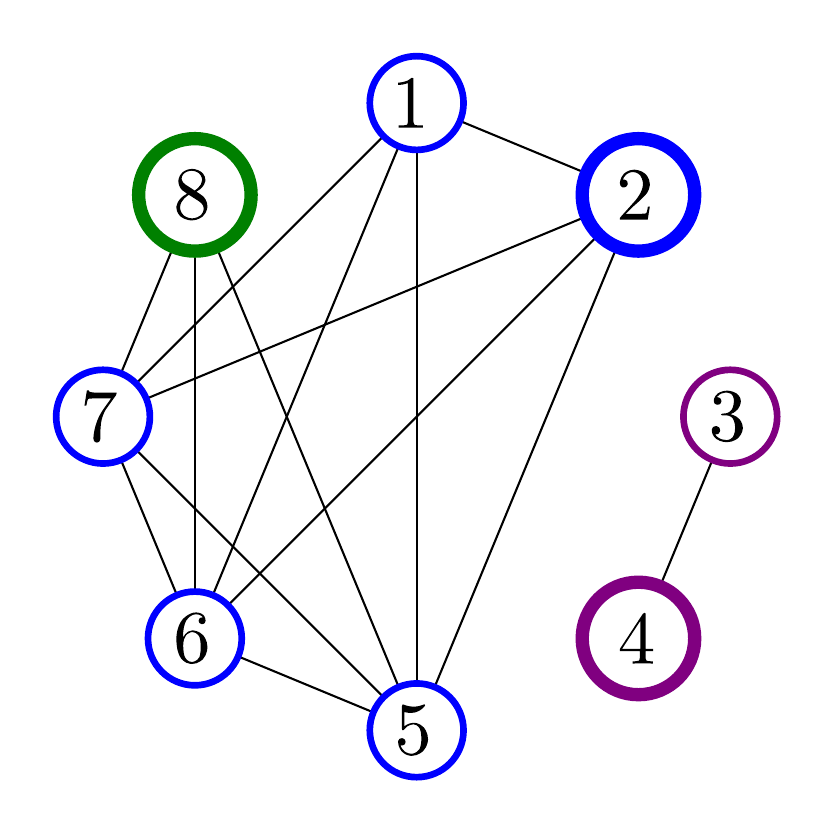}
		\caption{2nd and 3rd  pivots (nodes 8 \& 4)}
		\label{fig:pivot2}
	\end{subfigure}
	\caption{(a)~We display an instance of \ccc{} with four supernodes (gray convex sets) $A = \{1,2\}$, $B = \{3,4\}$, $C = \{5,6,7\}$, and $D = \{8\}$. Solid lines indicate positive edges. For easier visualization, the absence of a line between two nodes indicates a negative edge. Supernodes $A$ and $B$ are hostile since $(2,3) \in \hostile$ (dashed red line). (b)~Green lines indicate a maximal edge-disjoint set of dangerous pairs $\mathcal{D}$, where different line styles (solid, dotted, dashed) are used to indicate how edges are paired. (c)~In constructing $\auxgraph$, four edges from $E_\mathcal{D}$ are flipped from positive to negative because of the first case considered when constructing $\auxgraph$, since $E^+(A,B)$, $E^+(A,D)$, $E^+(B,D)$, and $E^+(B,C)$ are all subsets of $E_\mathcal{D}$. Edges $\{(2,8), (3,8)\}$ are in class $E^\texttt{b}$, and $\{(3,6), (4,5)\}$ are in class $E^\texttt{o}$ (see Section~\ref{sec:classes}). (d)~For remaining pairs of supernodes, edges are determined based on LP values. For this example, $X_{AC}^+ = X_{CD}^+ = 0$, so all edges from $A$ to $C$ and all edges from $C$ to $D$ become positive. The resulting auxiliary graph $\auxgraph$ is pivot-safe. (e)-(f)~For illustration, if node 2 is the first pivot, followed by nodes 8 and then 4, this results in a feasible clustering for instance $G$ with three clusters.}
	\label{fig:bigfigure}
\end{figure}

\textbf{Proof overview and intuition for $\auxgraph$.} Our approximation algorithm for \ccc{} (Algorithm~\ref{alg:mainalg}) rounds LP~\eqref{eq:ccc_lp} by running \textsf{Pivot} on $\auxgraph$. 
To prove an approximation guarantee, we will consider the set of bad triangles in $\auxgraph$ and use a charging argument to prove that the number of mistakes made is within a factor 3 of the LP objective cost $\sum_{uv} x_{uv}$. This holds in expectation if we choose pivots uniformly at random, and holds with certainty if we apply the pivoting strategy in Lemma~\ref{lem:3pt1} with a careful choice of edge budgets. This approach is similar to the framework van Zuylen and Williamson~\cite{vanzuylen2009deterministic} used to prove a deterministic 3-approximation for \ccc{} by rounding the Canonical LP relaxation. However, our covering LP is not as tight as the (more expensive) Canonical LP, so we require several new insights and proof techniques. 

For our analysis, we first partition edges and bad triangles in $\auxgraph$ into different types based on the construction of $\auxgraph$.
We remark that as long as we keep the first case in building $\auxgraph$ (which checks whether $E^+(A,B) \subseteq E_\mathcal{D}$), there are many ways to round other edges that ensure $\auxgraph$ is pivot-safe. However, the threshold $X_{AB}^+ < \min \{X_{AB}, \frac23\}$ is chosen very carefully to balance several competing bounds in Lemmas~\ref{lem:edgefacts} and~\ref{lem:badtri}, all of which are needed for our $(3+\varepsilon)$-approximation guarantee. For these lemmas, we will repeatedly apply that fact that $X_{AB}^+ \geq X_{AB}^-$ implies that every edge between $A$ and $B$ will be negative in $\auxgraph$. This leads to the threshold of the form $X_{AB}^+ < \min \{X_{AB}^-, \beta\}$ in Algorithm~\ref{alg:buildghat}, for deciding whether to make all edges between $A$ and $B$ positive. The bounds in Lemma~\ref{lem:edgefacts} only hold if $\beta \leq \frac23$, and Case 3 of Lemma~\ref{lem:badtri} only holds if $\beta \geq \frac23$, which justifies our choice of $\beta = \frac23$. The bounds in Lemma~\ref{lem:badtri} also rely crucially on the \heap{} constraints in the LP. If we are willing to loosen any of these constraints and bounds, we could obtain a simpler algorithm for \ccc{}, but at the expense of a worse approximation factor. 
The construction of $\auxgraph$ in Algorithm~\ref{alg:buildghat} is designed to be as simple as possible while still ensuring a $(3+\varepsilon)$-approximation guarantee.

\subsection{Edge classes and bad triangle types}
\label{sec:classes}
We say an edge $ij$ is \emph{flipped} if its sign in $G$ differs from its sign in $\auxgraph$. 
In order to succinctly refer to edges based on their sign in $G$ and $\auxgraph$, we define $E^{\texttt{-}\texttt{-}} = E^- \cap \auxe^-$,  $E^\texttt{-+} = E^- \cap \auxe^+$,  $E^\texttt{+-} = E^+ \cap \auxe^-$, and $E^\texttt{++} = E^+ \cap \auxe^+$.
We further refine $E^\texttt{+-}$ into two different edge classes:
\begin{align*}
	E^\texttt{b} &= \{ij \in E^\texttt{+-}\colon ij \in E_\mathcal{D} \text{ and } \rho(ij) \in E^\texttt{+-}\}\\
	E^\texttt{o} &= E^\texttt{+-} \setminus E^\texttt{b}.
\end{align*}
We refer to $E^\texttt{b}$ as \emph{balanced} edges, since each edge $ij \in E^\texttt{b}$ is balanced by a partner $\rho(ij) \in E^\texttt{b}$ that is also flipped from positive to negative. Set $E^\texttt{o}$ captures all \emph{other} edges from $E^\texttt{+-}$ (see Figure~\ref{fig:intermediate}). In total, this partitions edges into five different classes, which we denote by $E^\texttt{cl}$ for $\texttt{cl} \in \{\texttt{-}\texttt{-},\texttt{-+},\texttt{++},\texttt{b},\texttt{o}\}$. See Figure~\ref{fig:badtri}.

\begin{lemma}
	\label{lem:edgefacts}
	For two nodes $\{i,j\}$ where $S(i) \neq S(j)$,
	\begin{enumerate}[label = (\alph*)]
		\item $ij \in E_\mathcal{D} \implies x_{ij} + x_{\rho(ij)} \geq 1$. 
		\item $ij \in E^\texttt{++} \implies x_{ij} < \frac23$.
		\item $ij \in E^\texttt{-+}\cup E^\texttt{o} \implies x_{ij} \geq \frac{1}{3}$.
		\item $|E^\texttt{b}| \leq 2\sum_{ij \in E^\texttt{b}} x_{ij}$.
	\end{enumerate}
\end{lemma}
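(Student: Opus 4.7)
The four parts are closely coupled: (a) unlocks (d), and both (a) and (b) feed the case analysis in (c). I would therefore prove them in the order (a), (b), (c), (d).

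For (a), the idea is to read off the constraint that the LP attaches to the three supernodes touched by a dangerous pair. If $ij = ab$ with partner $\rho(ij) = cd$, then by definition of $\mathcal{D}$ there is one supernode shared by an endpoint of each edge (say $S(b) = S(c) = X$) and two supernodes $Y = S(a)$, $Z = S(d)$ with $(Y,Z) \in \hostile$. The triangle constraint in LP~\eqref{eq:ccc_lp} for $\{X,Y,Z\}$ gives $X_{XY}^+ + X_{XZ}^+ + X_{YZ}^- \geq 1$, and the hostile-edge constraint forces $X_{YZ}^- = 0$. Since $x_{ij} = X_{XY}^+$ and $x_{\rho(ij)} = X_{XZ}^+$, the claim follows; the only subtlety is verifying that the same conclusion holds under the symmetric roles for a dangerous pair (shared endpoint on either side), which is immediate from Algorithm~\ref{alg:computeD}.

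For (b), the construction in Algorithm~\ref{alg:buildghat} guarantees that an edge $ij$ with $S(i)\neq S(j)$ enters $\auxe^+$ only through the threshold check $X_{AB}^+ < \min\{X_{AB}^-, 2/3\}$, so $x_{ij} = X_{AB}^+ < 2/3$ is immediate. For (c), I split on the two subsets. For $ij \in E^{\texttt{-+}}$, the threshold $X_{AB}^+ < X_{AB}^-$ combined with the covering constraint $X_{AB}^+ + X_{AB}^- \geq 1$ yields $X_{AB}^- > 1/2$, so $x_{ij} = X_{AB}^- \geq 1/3$. For $ij \in E^{\texttt{o}} = E^{\texttt{+-}} \setminus E^{\texttt{b}}$, I would split further: if $ij \notin E_\mathcal{D}$ then $ij$ witnesses $E^+(A,B)\setminus E_\mathcal{D} \neq \emptyset$, so being in $\auxe^-$ forces $X_{AB}^+ \geq \min\{X_{AB}^-, 2/3\}$, which gives $X_{AB}^+ \geq 1/2$ or $X_{AB}^+ \geq 2/3$ (using $X^+ + X^- \geq 1$ in the first case); if instead $ij \in E_\mathcal{D}$, then by definition of $E^{\texttt{o}}$ we have $\rho(ij) \notin E^{\texttt{+-}}$, and since $\rho(ij) \in E^+$ this forces $\rho(ij) \in E^{\texttt{++}}$, so part (b) gives $x_{\rho(ij)} < 2/3$ and part (a) then gives $x_{ij} > 1/3$.

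For (d), the observation is that $E^{\texttt{b}}$ is closed under $\rho$: if $ij \in E^{\texttt{b}}$ then $\rho(ij) \in E^{\texttt{+-}}$ and $\rho(ij) \in E_\mathcal{D}$ with $\rho(\rho(ij)) = ij \in E^{\texttt{+-}}$, putting $\rho(ij)$ into $E^{\texttt{b}}$. Hence $\rho$ is an involution on $E^{\texttt{b}}$, and summing the inequality from (a) over all $ij \in E^{\texttt{b}}$ double-counts each term on the left, giving $2 \sum_{ij \in E^{\texttt{b}}} x_{ij} \geq |E^{\texttt{b}}|$. The only step requiring genuine care is the case $ij \in E^{\texttt{o}} \cap E_\mathcal{D}$ in (c), since it is the one place where the carefully chosen threshold $2/3$ in Algorithm~\ref{alg:buildghat} is used together with the \heap-free part (a); the other bounds are essentially one-line consequences of the LP constraints and the construction rule.
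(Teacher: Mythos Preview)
Your proposal is correct and follows essentially the same approach as the paper's proof: part (a) via the supernode triangle constraint with the hostile pair forcing $X^-_{YZ}=0$, part (b) directly from the threshold in Algorithm~\ref{alg:buildghat}, part (d) from the pairing structure of $E^\texttt{b}$ together with (a), and part (c) via a case analysis that ultimately combines (a) and (b) for the $E^\texttt{o}\cap E_\mathcal{D}$ subcase. The only cosmetic difference is that for $ij\in E^\texttt{o}$ the paper first splits on whether $X_{ij}^+\geq X_{ij}^-$ and only then on membership in $E_\mathcal{D}$, whereas you split directly on $E_\mathcal{D}$; your ordering is arguably a touch cleaner since the $ij\in E_\mathcal{D}$ branch needs no assumption on the relative sizes of $X^+$ and $X^-$.
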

\begin{proof}
	We provide a short proof for each property.
	
\textbf{Property (a).} Let $k\ell = \rho(ij)$, where nodes $k$ and $\ell$ satisfy $S(j) = S(k)$ and $(S(i), S(\ell)) \in \hostile$. Since $S(i)$ and $S(\ell)$ are hostile, we know $X_{i\ell}^- = 0$ by an LP constraint. Since $ij$ and $\rho(ij)$ are positive edges, we have $x_{ij} = X_{ij}^+$ and $x_{\rho(ij)} = x_{k\ell} = X_{k\ell}^+ = X_{j\ell}^+$. From the LP constraints we see $x_{ij} + x_{\rho(ij)} + 0 = X_{ij}^+ + X_{j\ell}^+ + X_{i\ell}^- \geq 1$.
	
	\textbf{Property (b).} Note that $ij \in E^+ \implies x_{ij} = X_{ij}^+$, and the construction of $\auxgraph$ implies $X_{ij}^+ < 2/3$.
	
	\textbf{Property (c).} If $ij \in E^\texttt{-+}$, then $x_{ij} = X_{ij}^-$ and we know $X_{ij}^+ < 2/3$ or else $ij$ would be negative in $\auxgraph$. Thus, the constraint $X_{ij}^+ + X_{ij}^- \geq 1$ implies $x_{ij} = X_{ij}^- \geq 1 - X_{ij}^+ > 1 - 2/3 = 1/3$. 
	
	For $ij \in E^\texttt{o}$, we know that $x_{ij} = X_{ij}^+$, and we consider several different cases to prove $x_{ij} \geq 1/3$. First of all, if $X_{ij}^+ \geq X_{ij}^-$, then constraint $X_{ij}^+ + X_{ij}^- \geq 1$ implies that $x_{ij} = X_{ij}^+ \geq 1/2$. Assume then that $X_{ij}^- > X_{ij}^+$, and break this into two subcases based on whether or not $ij \in E_\mathcal{D}$. If $X_{ij}^- > X_{ij}^+$ and $ij \notin E_\mathcal{D}$, we know that the sign for edges between $I = S(i)$ and $J = S(j)$ in $\auxgraph$ is determined by the LP values. 
	From the construction of $\auxgraph$, we know that edges between $I$ and $J$ are negative in $\auxgraph$ either because (i) $X_{IJ}^+ \geq X_{IJ}^-$ or (ii) $X_{IJ}^- > X_{IJ}^+$ and $X_{IJ}^+ \geq 2/3$. Because we assumed $X_{IJ}^- > X_{IJ}^+$, we know $x_{ij} = X_{IJ}^+ \geq 2/3$. The last case to consider for $ij \in E^\texttt{o}$ is when $X_{ij}^- > X_{ij}^+$ and $ij \in E_\mathcal{D}$. Because $ij \notin E^\texttt{b}$, we know that $\rho(ij) \in E^\texttt{++}$. By Property (b) we have $x_{\rho(ij)} < 2/3$, and Property (a) then implies that $x_{ij} \geq 1 - x_{\rho(ij)} \geq 1 - 2/3 = 1/3$. 
	
	\textbf{Property (d).} Recall that $E^\texttt{b}$ is defined so that $ij \in E^\texttt{b} \implies \rho(ij) \in E^\texttt{b}$, and from Property (a) we know that $x_{ij} + x_{\rho(ij)} \geq 1$. Since edges in $E^\texttt{b}$ come in pairs, and since the sum of LP values for each pair is at least 1, we know that $|E^\texttt{b}|/2 \leq\sum_{ij \in E^\texttt{b}}x_{ij}$.
\end{proof}

\begin{figure}[t]
	\centering
	\includegraphics[width=0.6\linewidth]{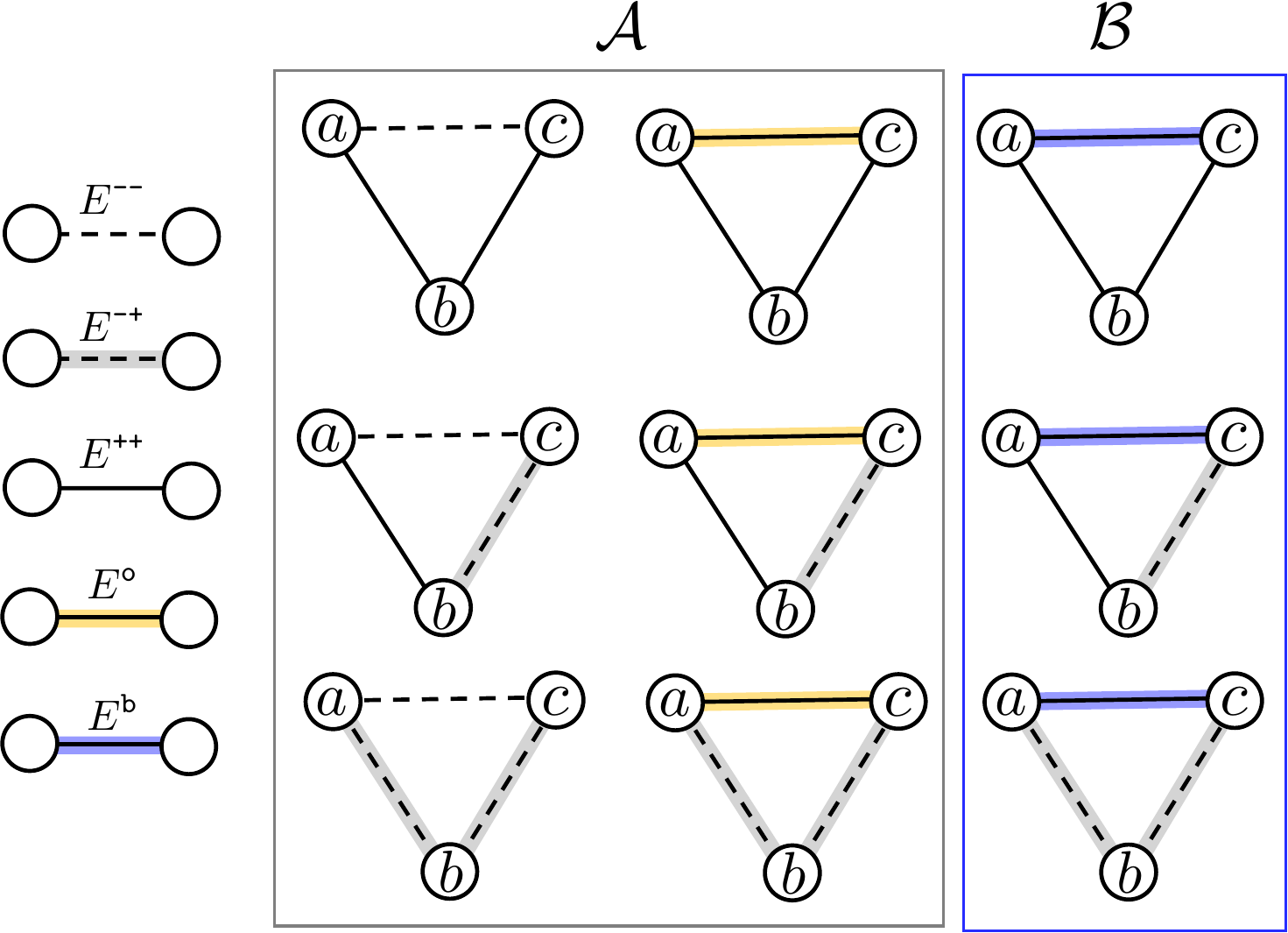}
	\caption{We illustrate the 5 edge classes in $G$, and resulting types of bad triangles in $\auxgraph$.} 
	\label{fig:badtri}
\end{figure}
Separating edges into classes leads to several different types of bad triangles in $\auxgraph$. Recall that $abc \in \badtri(\auxgraph)$ means that $\{ab, bc\} \subseteq\auxe^+$ and $ac \in \auxe^-$. There are two possibilities for $ab$: either $ab \in E^\texttt{++}$ or $ab \in E^\texttt{-+}$. Similarly, either $bc \in E^\texttt{++}$ or $bc \in E^\texttt{-+}$. There are three possible edge classes $\{ E^{\texttt{-}\texttt{-}}, E^\texttt{b},E^\texttt{o}\}$ for $ac \in \auxe^-$.  There are therefore 9 different types of bad triangles in $\auxgraph$, based on the edge type for $ac$ (3 choices), and how many of $\{ab, bc\}$ are in $E^\texttt{-+}$ (3 choices: 0, 1, or 2). See Figure~\ref{fig:badtri}. For our analysis, we also distinguish between bad triangles in $\auxgraph$ based on whether they involve a balanced edge (type-$\mathcal{B}$ bad triangles) or not (type-$\mathcal{A}$ bad triangles). Formally, define
\begin{align}
	\label{eq:typicaltri}
	\mathcal{A} &= \{ abc \colon ab \in \auxe^{+}, bc \in \auxe^{+},  ac \in \auxe^{-} \setminus E^\texttt{b}\} &(\text{first 2 columns of triangles in Figure~\ref{fig:badtri}})\\
	\label{eq:bluetri}
	\mathcal{B} &= \{ abc \colon ab \in \auxe^{+}, bc \in \auxe^{+},  ac \in E^\texttt{b}\}  &(\text{last column of triangles in Figure~\ref{fig:badtri}}).
\end{align}
\begin{lemma}
	\label{lem:badtri}
	If $abc \in \badtri(\auxgraph)$, then $\{S(a),S(b),S(c)\}$ are all distinct. Furthermore, if $abc \in \mathcal{A}$, then
	\begin{equation}
		\label{eq:triinequality}
		\mathbbm{1}(ab \in E^+) + \mathbbm{1}(bc \in E^+) +	\mathbbm{1}(ac \in E^-) \leq 3 \cdot (x_{ab} + x_{bc} + x_{ac}).
	\end{equation}
\end{lemma}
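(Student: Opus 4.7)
The first claim is a direct consequence of pivot-safety (Lemma~\ref{lem:pivotsafe}). Suppose $abc\in\badtri(\auxgraph)$, so $ab,bc\in\auxe^+$ and $ac\in\auxe^-$. If $S(a)=S(c)$, Property~1 of Definition~\ref{def:pivotsafe} forces $ac\in\auxe^+$, a contradiction. If $S(a)=S(b)$, the same property forces $a$ and $b$ to share their positive/negative neighbors in $\auxgraph$, contradicting $ac\in\auxe^-$ versus $bc\in\auxe^+$. The case $S(b)=S(c)$ is symmetric.

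For the inequality, the plan is a case analysis on the edge classes of $(ab,bc,ac)$. Since $abc\in\mathcal{A}$ we have $ac\in E^{\texttt{-}\texttt{-}}\cup E^\texttt{o}$ and each of $ab,bc$ lies in $E^{\texttt{++}}\cup E^{\texttt{-+}}$, giving six subcases. Throughout I would use: (i)~the per-edge bounds of Lemma~\ref{lem:edgefacts}, namely $x_{ij}\geq 1/3$ for $ij\in E^{\texttt{-+}}\cup E^\texttt{o}$ and $x_{ij}<2/3$ for $ij\in E^{\texttt{++}}$; (ii)~the triangle constraints of LP~\eqref{eq:ccc_lp} on the supernode triple $\{S(a),S(b),S(c)\}$, which is legal because these supernodes are distinct by part~1; (iii)~the \heap{} constraints generated by Algorithm~\ref{alg:heap}; and (iv)~the observation that whenever $ij\in\auxe^+$ with $S(i)\neq S(j)$, the flip-up rule in Algorithm~\ref{alg:buildghat} guarantees $X_{ij}^+<2/3$.

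The easy subcases are dispatched as follows. If $ab,bc\in E^{\texttt{-+}}$, Lemma~\ref{lem:edgefacts}(c) already gives $3x_{ab}\geq 1$, which meets LHS~$\leq 1$ for any class of $ac$. If exactly one of $ab,bc$ lies in $E^{\texttt{-+}}$ and $ac\in E^\texttt{o}$, the same single bound handles LHS~$=1$. If exactly one of $ab,bc$ is in $E^{\texttt{-+}}$ and $ac\in E^{\texttt{-}\texttt{-}}$ (LHS~$=2$), I combine $x_{ab}\geq 1/3$ with the triangle constraint $X_{ab}^+ + X_{bc}^+ + X_{ac}^- \geq 1$ and observation~(iv) to obtain $x_{bc}+x_{ac}>1/3$. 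Finally, if $ab,bc\in E^{\texttt{++}}$ and $ac\in E^{\texttt{-}\texttt{-}}$ (LHS~$=3$), then $\{a,b,c\}$ is a genuine bad triangle in $G$, so the same triangle constraint directly yields $x_{ab}+x_{bc}+x_{ac}\geq 1$.

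The main obstacle is the remaining subcase, $ab,bc\in E^{\texttt{++}}$ and $ac\in E^\texttt{o}$, where all three edges are positive in $G$ but LHS~$=2$, so no standard triangle constraint applies. I would split on whether $ac\in E_\mathcal{D}$. If $ac\in E_\mathcal{D}$, then $ac\notin E^\texttt{b}$ forces $\rho(ac)\in E^{\texttt{++}}$, giving $x_{\rho(ac)}<2/3$; the \heap{} constraint $x_{ab}+x_{bc}+x_{\rho(ac)}\geq 1$ that Algorithm~\ref{alg:heap} generates for this $(a,b,c)$ triple then yields $x_{ab}+x_{bc}>1/3$, and combining with $x_{ac}\geq 1/3$ from Lemma~\ref{lem:edgefacts}(c) completes the bound. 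If $ac\notin E_\mathcal{D}$, then Algorithm~\ref{alg:buildghat}'s decision to leave $ac$ negative forces either $X_{ac}^+\geq X_{ac}^-$ or $X_{ac}^+\geq 2/3$; in the former case the triangle constraint $X_{ab}^+ + X_{bc}^+ + X_{ac}^- \geq 1$ together with $x_{ac}=X_{ac}^+\geq X_{ac}^-$ gives $x_{ab}+x_{bc}+x_{ac}\geq 1$, and in the latter $3x_{ac}\geq 2$ suffices on its own. This is precisely where the \heap{} constraints of LP~\eqref{eq:ccc_lp} and the $2/3$ threshold in Algorithm~\ref{alg:buildghat} are jointly essential for the $(3+\varepsilon)$-approximation.
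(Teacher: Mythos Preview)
Your proof is correct and follows essentially the same approach as the paper: a case analysis using the supernode triangle constraints, Lemma~\ref{lem:edgefacts}, the $2/3$ threshold in Algorithm~\ref{alg:buildghat}, and the \heap{} constraint for the all-positive subcase. The only difference is organizational: the paper splits into four cases based primarily on $ac$ (and handles $ac\in E^{\texttt{-}\texttt{-}}$ uniformly via the single observation $x_{ab}\geq X_{ab}^+$ for any $ab\in\auxe^+$, rather than your three separate subcases), but the substance is the same.
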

\begin{proof}
	If any of $\{a,b,c\}$ shared the same supernode, one can easily show this would violate the fact that $\auxgraph$ is pivot-safe.
	The left hand side of inequality~\eqref{eq:triinequality} is the number of edges in $abc$ that are \textit{not} flipped when building $\auxgraph$.
	The proof follows from a cases analysis on edge $ac \in E^{\texttt{-}\texttt{-}} \cup E^\texttt{o}$.
	
	\textbf{Case 1: $ac \in E^{\texttt{-}\texttt{-}}$.} In this case, $x_{ac} = X_{ac}^-$. Furthermore, we know that $x_{ab} \geq  X_{ab}^+$ because $ab \in E^\texttt{++} \implies x_{ab} = X_{ab}^+$ whereas $ab \in E^\texttt{-+} \implies x_{ab} = X_{ab}^- > X_{ab}^+$ by the construction of $\auxgraph$. By the same argument, we know $x_{bc} \geq X_{bc}^+$. Combining this with the LP constraints gives:
	\begin{equation}
		\label{eq:seq}
		\mathbbm{1}(ab \in E^+) + \mathbbm{1}(bc \in E^+) +	\mathbbm{1}(ac \in E^-) \leq 3 \leq 3\cdot (X_{ab}^+ + X_{bc}^+ + X_{ac}^-) \leq 3\cdot (x_{ab} + x_{bc} + x_{ac}).
	\end{equation}
	
	\textbf{Case 2: $ac \in E^\texttt{o}$ and $X_{ac}^+ \geq X_{ac}^-$.}
	In this case $x_{ac} = X_{ac}^+ \geq X_{ac}^-$, and the inequalities in~\eqref{eq:seq} still apply.

	\textbf{Case 3: $ac \in E^\texttt{o}$,  $X_{ac}^- > X_{ac}^+$, and $ac \notin E_\mathcal{D}$.} Because $ac \notin E_\mathcal{D}$, the construction of $\auxgraph$ implies that the sign of edges between $S(a)$ and $S(c)$ is determined by LP values. The only way for $ac$ to be negative in $\auxgraph$ while $X_{ac}^- > X_{ac}^+$  is if $x_{ac} = X_{ac}^+ \geq 2/3$. Since there are at most two unflipped edges in the triangle, we have
	\begin{align*}
		\mathbbm{1}(ab \in E^+) + \mathbbm{1}(bc \in E^+) +	\mathbbm{1}(ac \in E^-) \leq 2 \leq 3 \cdot (x_{ac}) \leq 3\cdot (x_{ab} + x_{bc} + x_{ac}).
	\end{align*}
	
	\textbf{Case 4: $ac \in E^\texttt{o}$,  $X_{ac}^- > X_{ac}^+$, and $ac \in E_\mathcal{D}$.} If $\{ab, bc\} \subseteq E^\texttt{-+}$, then $\mathbbm{1}(ab \in E^+) + \mathbbm{1}(bc \in E^+) + \mathbbm{1}(ac \in E^-) = 0$ so the inequality holds trivially. If only one of these edges is in $E^\texttt{-+}$, then nodes $\{a,b,c\}$ define a bad triangle in $G$ so LP constraints guarantee $x_{ab} + x_{bc} + x_{ac} \geq 1$, hence
	\begin{align*}
		\mathbbm{1}(ab \in E^+) + \mathbbm{1}(bc \in E^+) +	\mathbbm{1}(ac \in E^-) = 1 \leq (x_{ab} + x_{bc} + x_{ac}).
	\end{align*}
	Finally, if $ab$ and $bc$ are both in $E^\texttt{++}$, then $abc$ is a triangle of all positive edges in $G$. Because $ac \in E_\mathcal{D}$, we know that the LP includes a \heap{} constraint $x_{\rho(ac)} + x_{ab} + x_{bc} \geq 1$ (see Algorithm~\ref{alg:heap}). Furthermore, $\rho(ac) \in E^\texttt{++}$ or else $ac$ would be in $E^\texttt{b}$ instead of $E^\texttt{o}$. Property (b) of Lemma~\ref{lem:edgefacts} shows $x_{\rho(ac)} < \frac23$. Therefore, $x_{ab} + x_{bc} \geq 1 - x_{\rho(ac)} > 1- \frac23 = \frac13$. Combining this with $x_{ac} + x_{\rho(ac)} \geq 1$ (Property (a) of Lemma~\ref{lem:edgefacts}) gives
	\begin{align*}
		x_{ab} + x_{bc} + x_{ac} \geq x_{ab} + x_{bc} + 1 - x_{\rho(ac)} \geq \frac23 = \frac13 \left(\mathbbm{1}(ab \in E^+) + \mathbbm{1}(bc \in E^+) +	\mathbbm{1}(ac \in E^-) \right).
	\end{align*}
\end{proof}

\subsection{Approximation analysis for \textsf{Pivot}}
We now consider applying \textsf{Pivot} to $\auxgraph$ (see Algorithm~\ref{alg:pivot}). Recall that $\auxgraph_i = (V_i, \auxe_i^+, \auxe_i^-)$ is the subgraph right before choosing the $i$th pivot node.
\begin{figure}[t]
	\centering
	\includegraphics[width=0.9\linewidth]{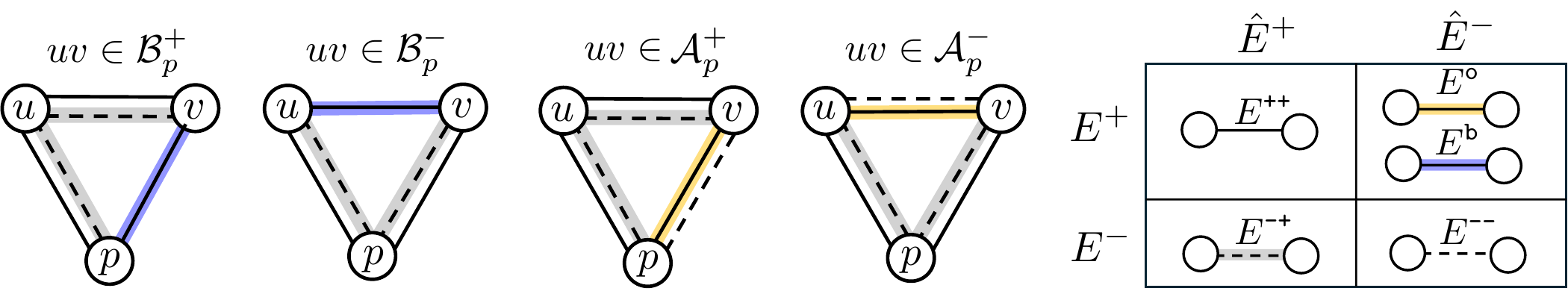}
	\caption{If an edge $uv$ is clustered in a way that disagrees with its edge sign in $\auxgraph$, this is always due to a bad triangle $\{u,v,p\}$ in $\auxgraph$ where $p$ is the pivot node. This figure highlights four different cases for such a bad triangle $\{u,v,p\}$. The first two correspond to triangles from $\mathcal{B}$, and the second two correspond to triangles from $\mathcal{A}$; see Figure~\ref{fig:badtri}. For each edge $uv$, we indicate all of the different classes $uv$ could correspond to in each case.}
	\label{fig:puvtypes}
\end{figure}
If $p \in V_i$ is chosen as the $i$th pivot node, this would form a cluster $C_p = \{p\} \cup \{v \in V_i \colon vp \in \auxe_i^+\}$. An edge $uv$ would then be clustered in a way that disagrees with its sign in $\auxgraph_i$ if and only if $\{u,v,p\}$ is a bad triangle in $\gi{}$. 
This edge $uv$ would belong to one of four sets, based on whether $uv \in \auxe^+_i$ or $uv \in \auxe^-_i$, and whether the bad triangle $\{u,v,p\}$ is of type-$\mathcal{A}$ or type-$\mathcal{B}$ (see Figure~\ref{fig:puvtypes}):
\begin{align*}
	\mathcal{B}_p^+(\gi{}) &= \{uv \in \auxe_i^+ \colon pu \in \auxe_i^+, pv \in E^\texttt{b}\} & 	\mathcal{A}_p^+(\gi{}) &= \{uv \in \auxe_i^+ \colon pu \in \auxe_i^+, pv \in \auxe_i^- \setminus E^\texttt{b}\} \\
	\mathcal{B}_p^-(\gi{}) &= \{uv \in E^\texttt{b} \colon pu \in \auxe_i^+, pv \in \auxe_i^+\} &
	\mathcal{A}_p^-(\gi{}) &= \{uv \in \auxe_i^- \setminus E^\texttt{b} \colon pu \in \auxe_i^+, pv \in \auxe_i^+\}.
\end{align*}
To match this to edge sets in Lemma~\ref{lem:3pt1}, note that $\badtri_p^+(\gi{}) = \mathcal{A}_p^+(\gi{}) \cup \mathcal{B}_p^+(\gi{})$ and $\badtri_p^-(\gi{}) = \mathcal{A}_p^-(\gi{}) \cup \mathcal{B}_p^-(\gi{})$. In what follows, we omit $\auxgraph_i$ when it clear from context, e.g., $\mathcal{B}_p^+ = \mathcal{B}_p^+(\gi{})$.

An edge $uv \in \badtri_p^+ \cup \badtri_p^-$ is only a mistake for the \ccc{} instance $G$ if $uv \in E^{\texttt{-}\texttt{-}} \cup E^\texttt{++}$, i.e., $uv$ is not flipped.
When pivoting on $p$, the number of mistakes made at unflipped edges is equal to $|\badtri_p^+ \cap E^+| + |\badtri_p^- \cap E^-|$. The following lemma bounds these types of mistakes.
\begin{lemma}
	\label{lem:pivotexists}
	There exists some $p \in V_i$ satisfying
	\begin{align}
		\label{eq:bound}
		|\badtri_p^+ \cap E^+| + |\badtri_p^-  \cap E^-|  - 2|\mathcal{B}_p^-| - \sum_{uv \in \badtri^+_p \cup \mathcal{A}_p^-} 3x_{uv} \leq 0.
	\end{align}
	If a pivot node is chosen uniformly at random, this bound holds in expectation.
\end{lemma}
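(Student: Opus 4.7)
The strategy is a standard averaging argument: I will sum the left-hand side of~\eqref{eq:bound} over all $p \in V_i$, show via a double-counting that the total is $\le 0$, and then conclude by averaging that some pivot $p$ achieves the bound and that uniformly random pivoting satisfies it in expectation.

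First, I will reindex the sum $\sum_{p \in V_i}\text{LHS}$ by bad triangles instead of pivots. For every bad triangle $abc \in \badtri(\gi)$ centered at $b$ (so $ab, bc \in \auxe_i^+$ and $ac \in \auxe_i^-$), exactly three choices of pivot touch it: $p=a$ puts $bc$ in $\badtri_a^+$, $p=c$ puts $ab$ in $\badtri_c^+$, and $p=b$ puts $ac$ in $\badtri_b^-$. Moreover, $ac \in \mathcal{B}_b^-$ if $ac \in E^\texttt{b}$ and $ac \in \mathcal{A}_b^-$ otherwise. So the contribution of the triangle $abc$ to $\sum_p \text{LHS}$ is precisely
\begin{equation*}
\mathbbm{1}(ab\in E^+) + \mathbbm{1}(bc\in E^+) + \mathbbm{1}(ac\in E^-) - 2\,\mathbbm{1}(ac\in E^\texttt{b}) - 3\bigl(x_{ab} + x_{bc} + x_{ac}\,\mathbbm{1}(ac\notin E^\texttt{b})\bigr).
\end{equation*}

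Next, I split the bad triangles into the classes $\mathcal{A}$ and $\mathcal{B}$ from~\eqref{eq:typicaltri}--\eqref{eq:bluetri}. For $abc \in \mathcal{A}$, the indicator $\mathbbm{1}(ac \in E^\texttt{b})$ vanishes and the per-triangle contribution reduces to $\mathbbm{1}(ab \in E^+)+\mathbbm{1}(bc \in E^+) + \mathbbm{1}(ac \in E^-) - 3(x_{ab}+x_{bc}+x_{ac})$, which is $\le 0$ by Lemma~\ref{lem:badtri}. For $abc \in \mathcal{B}$, we have $ac \in E^\texttt{b} \subseteq E^+$, so $\mathbbm{1}(ac \in E^-) = 0$; the contribution becomes $\mathbbm{1}(ab\in E^+) + \mathbbm{1}(bc \in E^+) - 2 - 3(x_{ab}+x_{bc}) \le 2 - 2 - 3(x_{ab}+x_{bc}) \le 0$. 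Summing both cases yields $\sum_{p \in V_i} \text{LHS} \le 0$, which immediately gives both the existential statement (some $p \in V_i$ witnesses the bound) and the expectation statement (the average over $p$ is $\le 0$).

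The only real subtlety is the careful bookkeeping in the first step: I need to ensure that each bad triangle is charged correctly to exactly one $\badtri_p^+$ per endpoint on the ``$E^+$'' side and exactly once to either $\mathcal{A}_p^-$ or $\mathcal{B}_p^-$ (but not both) on the ``$E^-$'' side. Lemma~\ref{lem:badtri} tells us $\{S(a),S(b),S(c)\}$ are distinct, so the three roles of pivot $p \in \{a,b,c\}$ really are distinct, and the coefficient $-2$ on $|\mathcal{B}_p^-|$ is exactly what is needed to cancel the at-most-$2$ contribution $\mathbbm{1}(ab\in E^+)+\mathbbm{1}(bc\in E^+)$ from type-$\mathcal{B}$ triangles. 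This tight cancellation is what makes the coefficient choices in the lemma statement non-arbitrary, and is the one place where care is required.
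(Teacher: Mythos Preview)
Your proof is correct and follows essentially the same approach as the paper: both are averaging arguments that reindex the sum over pivots as a sum over bad triangles in $\gi$, split into type-$\mathcal{A}$ and type-$\mathcal{B}$ triangles, invoke Lemma~\ref{lem:badtri} for the former, and use the trivial bound $\mathbbm{1}(ab\in E^+)+\mathbbm{1}(bc\in E^+)\le 2$ for the latter. Your per-triangle bookkeeping is slightly more explicit than the paper's aggregated presentation, but the underlying argument is identical.
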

\begin{proof}
	Let $\mathcal{B}_i \subseteq \mathcal{B}$ denote type-$\mathcal{B}$ bad triangles in $\gi$. Define $\mathcal{A}_i \subseteq \mathcal{A}$ analogously. Because a bad triangle has two positive edges and one negative edge, summing up the values of $|\mathcal{B}_p^+|$ across all choices of $p$ is equivalent to counting each type-$\mathcal{B}$ bad triangle in $\gi$ twice. Similarly, summing up $|\mathcal{B}_p^-|$ across all choices of $p$ amounts to counting each type-$\mathcal{B}$ triangle once. Therefore,
	\begin{align*}
		\sum_{p \in V_i} |\mathcal{B}_p^+ \cap E^+| \leq \sum_{p \in V_i} |\mathcal{B}_p^+| = 2 |\mathcal{B}_i| = 2 \sum_{p \in V} |\mathcal{B}_p^-|.
	\end{align*}
	Using similar logic, and applying the bound in Lemma~\ref{lem:badtri} for type-$\mathcal{A}$ bad triangles, we get
	\begin{align*}
		\sum_{p \in V_i} |\mathcal{A}_p^+ \cap E^+| + |\mathcal{A}_p^- \cap E^-| &= \sum_{abc \in \mathcal{A}_i} \mathbbm{1}(ab \in E^+) + \mathbbm{1}(bc \in E^+) +	\mathbbm{1}(ac \in E^-)\\
		&\leq \sum_{abc \in \mathcal{A}_i} 3(x_{ab} + x_{bc} + x_{ac}) = \sum_{p \in V_i} \sum_{uv \in \mathcal{A}_p^+ \cup \mathcal{A}_p^-} 3x_{uv} \leq \sum_{p \in V_i} \sum_{uv \in \badtri_p^+ \cup \mathcal{A}_p^-} 3x_{uv}.
	\end{align*}
	Combining these inequalities with the observation that $\badtri_p^- \cap E^- = \mathcal{A}_p^- \cap E^-$ and $\badtri_p^+ = \mathcal{A}_p^+ \cup \mathcal{B}_p^+$ gives
	\begin{align*}
	\sum_{p \in V_i}  |\badtri_p^+ \cap E^+| + |\badtri_p^- \cap E^-| =	\sum_{p \in V_i} |\mathcal{B}_p^+ \cap E^+| + |\mathcal{A}_p^+ \cap E^+| + |\mathcal{A}_p^- \cap E^-| \leq \sum_{p\in V_i} \Big( 2 |\mathcal{B}_p^-| + \sum_{uv \in \badtri_p^+ \cup \mathcal{A}_p^-} 3x_{uv}\Big).
	\end{align*}
	By considering all possible pivots $p$, we must find at least one pivot satisfying inequality~\eqref{eq:bound}. If $p$ is chosen uniformly at random, the left side of inequality~\eqref{eq:bound} is a random variable with expected value at most 0.
\end{proof}
In order to define a deterministic pivoting strategy that fits the template in Lemma~\ref{lem:3pt1}, we define
\begin{equation}
	\label{yuvfinal}
	y_{uv} = \begin{cases}
		2 & \text{if $uv \in E^\texttt{b}$} \\
		3x_{uv} & \text{otherwise},
	\end{cases}
\end{equation}
where $\{x_{uv}\}_{uv \in E^+ \cup E^-}$ is a $(1+\varepsilon/3)$-approximate solution for LP~\eqref{eq:ccc_lp}.
\begin{theorem}
	\label{thm:main}
	Using the edge budgets in Eq.~\eqref{yuvfinal}, Algorithm~\ref{alg:mainalg} returns a $(3+\varepsilon)$-approximate solution for \ccc{} if in iteration $i$ of running \textsf{Pivot} on $\auxgraph$, a pivot node $p$ is chosen to minimize the ratio
	\begin{equation}
		\label{eq:ratio}
	\frac{|\badtri^+_p(\auxgraph_i) \cap E^+| + |\badtri^-_p(\auxgraph_i) \cap E^-| }{ \sum_{uv \in \badtri^+_p(\auxgraph_i) \cup \badtri^-_p(\auxgraph_i)} y_{uv}}.
	\end{equation}
For uniform random pivot nodes, Algorithm~\ref{alg:mainalg} returns a $(3+\varepsilon)$-approximation in expectation.
\end{theorem}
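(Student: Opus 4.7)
The plan is to show that $\text{cost}_G(\mathcal{C}) \le 3 \sum_{uv} x_{uv}$ (in expectation, for uniform random pivots). Combined with the $(1+\varepsilon/3)$-approximation to LP~\eqref{eq:ccc_lp} and the fact that the LP value lower-bounds OPT, this yields the desired $(3+\varepsilon)$-approximation. Feasibility of $\mathcal{C}$ is immediate from Lemma~\ref{lem:pivotsafe} regardless of pivot choice, and the claimed runtime follows by combining Theorem~\ref{thm:covering} with Lemma~\ref{lem:3pt1} and the $O(n^3)$ cost of Algorithms~\ref{alg:computeD} and~\ref{alg:heap}.

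To establish the cost bound, I would decompose the mistakes in $G$ into \emph{unflipped mistakes} $|E^\texttt{++}_{\text{sep}}| + |E^{\texttt{-}\texttt{-}}_{\text{tog}}|$ and \emph{flipped mistakes} $|E^\texttt{b}_{\text{sep}}| + |E^\texttt{o}_{\text{sep}}| + |E^\texttt{-+}_{\text{tog}}|$, where ``sep''/``tog'' indicate whether the edge's endpoints end in different/the same cluster of $\mathcal{C}$. The first step is to verify that the budgets $\{y_{uv}\}$ from~\eqref{yuvfinal} make the denominator in~\eqref{eq:ratio} exactly equal to the right-hand side of the bound in Lemma~\ref{lem:pivotexists}: since $(\badtri^+_p \cup \badtri^-_p) \cap E^\texttt{b} = \mathcal{B}^-_p$, we have $\sum_{uv \in \badtri^+_p \cup \badtri^-_p} y_{uv} = 2|\mathcal{B}^-_p| + 3 \sum_{uv \in \mathcal{T}^+_p \cup \mathcal{A}^-_p} x_{uv}$. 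Consequently, the ratio-minimizing pivot (or a uniformly random pivot in expectation) satisfies $|\badtri^+_{p_i}\cap E^+| + |\badtri^-_{p_i}\cap E^-| \le \sum_{uv \in \badtri^+_{p_i} \cup \badtri^-_{p_i}} y_{uv}$ at each iteration $i$. Summing over iterations and using that each edge lies in $\badtri^+_{p_i} \cup \badtri^-_{p_i}$ for at most one $i$ (since that iteration's pivot places one of its endpoints into the cluster $C_{p_i}$), I obtain
\begin{equation*}
|E^\texttt{++}_{\text{sep}}| + |E^{\texttt{-}\texttt{-}}_{\text{tog}}| \;\le\; 2\,|E^\texttt{b}_{\text{tog}}| \,+\, 3\!\!\sum_{uv \in (E^+\cup E^-)\setminus E^\texttt{b}}\!\! x_{uv},
\end{equation*}
where $E^\texttt{b}_{\text{tog}}$ denotes the balanced edges whose endpoints end in the same cluster of $\mathcal{C}$.

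For the flipped mistakes at $E^\texttt{o}$ and $E^\texttt{-+}$, Lemma~\ref{lem:edgefacts}(c) directly gives $|E^\texttt{o}_{\text{sep}}| + |E^\texttt{-+}_{\text{tog}}| \le 3 \sum_{uv \in E^\texttt{o} \cup E^\texttt{-+}} x_{uv}$ since $x_{uv} \ge 1/3$ on those classes. The main obstacle---and the reason for introducing the balanced class and the budget $y_{uv}=2$---is handling $E^\texttt{b}$: the pivot analysis charges $2|E^\texttt{b}_{\text{tog}}|$ while the flipped mistakes contribute $|E^\texttt{b}_{\text{sep}}| = |E^\texttt{b}| - |E^\texttt{b}_{\text{tog}}|$, for a combined contribution of $|E^\texttt{b}_{\text{tog}}| + |E^\texttt{b}|$ that must be bounded by $3\sum_{uv \in E^\texttt{b}} x_{uv}$. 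The key new observation is that in \emph{any} feasible clustering, at most one edge of each pair $(ab, cd) \in E^\texttt{b} \times E^\texttt{b}$ (with $s(b)=s(c)$ and $(S(a),S(d))\in\hostile$) can have its endpoints placed together: otherwise $a,b,c,d$ would all share a cluster, contradicting $(S(a),S(d))\in\hostile$. Feasibility of $\mathcal{C}$ therefore forces $|E^\texttt{b}_{\text{tog}}| \le |E^\texttt{b}|/2$, and combining with Lemma~\ref{lem:edgefacts}(d) yields $|E^\texttt{b}_{\text{tog}}| + |E^\texttt{b}| \le \tfrac{3}{2}|E^\texttt{b}| \le 3 \sum_{uv \in E^\texttt{b}} x_{uv}$. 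Adding the three bounds gives $\text{cost}_G(\mathcal{C}) \le 3\sum_{uv} x_{uv}$. The random-pivot case follows by applying linearity of expectation to the same argument, noting that $|E^\texttt{b}_{\text{tog}}| \le |E^\texttt{b}|/2$ still holds deterministically because $\mathcal{C}$ is always feasible.
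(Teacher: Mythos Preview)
Your overall architecture is exactly the paper's, and most of the pieces are right: the identification of the denominator with $2|\mathcal{B}_p^-|+3\sum_{uv\in\badtri_p^+\cup\mathcal{A}_p^-}x_{uv}$, the use of Lemma~\ref{lem:pivotexists}, the bound $|E^\texttt{b}_{\text{tog}}|\le |E^\texttt{b}|/2$ via feasibility, and Lemma~\ref{lem:edgefacts}(c),(d) all match the paper.

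There is, however, a genuine gap in the bookkeeping. When you sum the pivot inequality over iterations you relax the right-hand side to $2|E^\texttt{b}_{\text{tog}}|+3\sum_{uv\in(E^+\cup E^-)\setminus E^\texttt{b}}x_{uv}$. That relaxation throws away exactly the information you need: the edges in $\bigcup_i(\badtri_{p_i}^+\cup\mathcal{A}_{p_i}^-)$ form the set $W=E^\texttt{++}_{\text{sep}}\cup E^\texttt{-+}_{\text{sep}}\cup E^{\texttt{-}\texttt{-}}_{\text{tog}}\cup E^\texttt{o}_{\text{tog}}$, which is \emph{disjoint} from $E^\texttt{o}_{\text{sep}}\cup E^\texttt{-+}_{\text{tog}}$. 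After your relaxation, the first bound already spends the budget $3x_{uv}$ on every edge of $E^\texttt{o}\cup E^\texttt{-+}$, and then your second bound $|E^\texttt{o}_{\text{sep}}|+|E^\texttt{-+}_{\text{tog}}|\le 3\sum_{uv\in E^\texttt{o}\cup E^\texttt{-+}}x_{uv}$ spends it again. Adding the three bounds as you wrote them gives only $\text{cost}_G(\mathcal{C})\le 3\sum_{uv}x_{uv}+3\sum_{uv\in E^\texttt{o}\cup E^\texttt{-+}}x_{uv}$, not $3\sum_{uv}x_{uv}$.

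The fix is simple and is what the paper does: keep the tighter set $W$ in the first bound, tighten the second to $|E^\texttt{o}_{\text{sep}}|+|E^\texttt{-+}_{\text{tog}}|\le 3\sum_{uv\in E^\texttt{o}_{\text{sep}}\cup E^\texttt{-+}_{\text{tog}}}x_{uv}$ (still immediate from Lemma~\ref{lem:edgefacts}(c)), and then observe that $W$, $E^\texttt{b}$, and $E^\texttt{o}_{\text{sep}}\cup E^\texttt{-+}_{\text{tog}}$ are pairwise disjoint so the three right-hand sides sum to at most $3\sum_{uv}x_{uv}$.
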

\begin{proof}
	We first prove the result for deterministic pivot choices. Let $\pivset \subseteq V$ represent the sequence of pivot nodes obtained by minimizing the ratio in~\eqref{eq:ratio} at each iteration. Minimizing this ratio guarantees we satisfy inequality~\eqref{eq:bound} in Lemma~\ref{lem:pivotexists} at each step. Let $\mathcal{C} = \{C_p \colon p \in \pivset\}$ be the resulting clustering. For each edge class $\textsf{cl} \in \{\texttt{-}\texttt{-}, \texttt{-+}, \texttt{-+}, \texttt{b}, \texttt{o}\}$, let 
	\begin{align*}
		B^{\textsf{cl}} = \{uv \in E^{\textsf{cl}} \colon \mathcal{C}(u) \neq \mathcal{C}(v)\} \quad \text{ and } \quad I^{\textsf{cl}} = \{uv \in E^{\textsf{cl}} \colon \mathcal{C}(u) = \mathcal{C}(v)\}
	\end{align*}
	denote edges from $E^\texttt{cl}$ on the boundary and interior of $\mathcal{C}$ respectively.
	The cost of $\mathcal{C}$ is then given by
	\begin{align}
		\label{eq:cost}
		\text{cost}_G(\mathcal{C}) &= |\imm| + |\imp| + |\bpp| +|\bb| + |\bo|.
	\end{align}

	For each $p \in \mathcal{P}$, let $\badtri_p^+ = \badtri_p^+(\auxgraph_{\iter{p}})$ where $\iter{p}$ is the iteration in which node $p$ was chosen as pivot, and define $\badtri_p^-$, $\mathcal{A}_p^+$, $\mathcal{A}_p^-$, $\mathcal{B}_p^+$, and $\mathcal{B}_p^-$ analogously.
	Then
	\begin{equation}
		\label{bppimm}
		\sum_{p \in \mathcal{P}} |\badtri^+_p \cap E^+| + |\badtri^-_p \cap E^-| = |\bpp| + |\imm|.
	\end{equation} 
	Similarly, $\mathcal{B}_p^-$ denotes edges from $E^\texttt{b}$ that ended up inside of $C_p$ when $p$ was chosen as a pivot, so
	\begin{equation}
		\label{ib}
		\sum_{p \in\mathcal{P}} 2 |\mathcal{B}_p^-| = 2 |I^\texttt{b}|.
	\end{equation}
	Next, observe that $\badtri_p^+$ captures all edges from $E^\texttt{++} \cup E^{\texttt{-+}}$ that end up on the boundary of $C_p$ when $p$ is chosen as pivot, and $\mathcal{A}_p^-$ captures all edges from $E^{\texttt{-}\texttt{-}} \cup E^\texttt{o}$ that end up inside $C_p$ (see Figure~\ref{fig:puvtypes}). This means that
	\begin{equation}
		\label{wp}
		\bigcup_{p \in \mathcal{P}} \badtri_p^+ \cup \mathcal{A}_p^- = B^\texttt{++} \cup B^{\texttt{-+}} \cup I^{\texttt{-}\texttt{-}} \cup I^\texttt{o}.
	\end{equation}
	Define $W_\mathcal{P} = B^\texttt{++} \cup B^{\texttt{-+}} \cup I^{\texttt{-}\texttt{-}} \cup I^\texttt{o}$. Combining Eq.~\eqref{bppimm}-\eqref{wp} with inequality~\eqref{eq:bound} gives
	\begin{equation}
		\label{eq:hardest}
		|\bpp| + |\imm| = \sum_{p \in \pivset} |\badtri^+_p \cap E^+| + |\badtri^-_p \cap E^-|
		\leq \sum_{p \in \pivset} \Big( 2|\mathcal{B}_p^-| + \sum_{uv \in \badtri_p^+ \cup \mathcal{A}_p^-} 3x_{uv} \Big)
		= 2|I^\texttt{b}| + \sum_{uv \in W_\mathcal{P}} 3 x_{uv}.
	\end{equation}
	If $uv \in I^\texttt{b}$, then $\rho(uv) \in B^\texttt{b}$, or there would be a constraint violation. Thus, $|\ib| \leq |E^\texttt{b}|/2$. Furthermore, Property (d) of Lemma~\ref{lem:edgefacts} shows $|E^\texttt{b}| \leq \sum_{uv \in E^\texttt{b}} 2x_{uv}$.  Combining this with inequality~\eqref{eq:hardest} gives
	\begin{equation}
		\label{eq:hardblue}
		|\bb| + |\bpp| + |\imm| \leq |\bb| + 2|I^\texttt{b}| + \sum_{uv \in W_\mathcal{P}} 3 x_{uv} = |E^\texttt{b}| + |I^\texttt{b}| + \sum_{uv \in W_\mathcal{P}} 3 x_{uv} 
		\leq \sum_{uv \in W_\mathcal{P}\cup E^\texttt{b}} 3 x_{uv}.
	\end{equation}
	If $uv \in E^\texttt{o} \cup E^\texttt{-+}$, then $x_{uv} \geq \frac{1}{3}$ by Property (c) of Lemma~\ref{lem:edgefacts}. Thus
	\begin{equation}
		\label{eq:thirdeasy}
		|\bo| + |\imp| \leq \sum_{uv \in B^\texttt{o} \cup \imp} 3x_{uv}.
	\end{equation}
	Because $(\bo \cup \imp)$, $E^\texttt{b}$, and $W_\mathcal{P}$ are all disjoint sets of edges, the inequalities in~\eqref{eq:hardblue} and~\eqref{eq:thirdeasy} imply that
	\begin{equation}
		\text{cost}_G(\mathcal{C}) = 	|\bb| + |\bpp| + |\imm| +|\bo| + |\imp| \leq \sum_{uv \in E^+ \cup E^-} 3 x_{uv}.
	\end{equation}
	Algorithm~\ref{alg:mainalg} assumes $\{x_{uv}\}_{uv \in E^+ \cup E^-}$ is a $(1+ \varepsilon/3)$-approximate solution to the LP relaxation, so $\text{cost}_G(\mathcal{C})$ is within a factor $(3+\varepsilon)$ of the optimal \ccc{} solution for our deterministic pivoting strategy.
	
	If pivot nodes are chosen uniformly at random, then Lemma~\ref{lem:pivotexists} guarantees inequality~\eqref{eq:hardest} holds in expectation. Furthermore, the inequality $|E^\texttt{b}| \leq 2 \sum_{uv \in E^\texttt{b}} x_{uv}$ and inequality~\eqref{eq:thirdeasy} hold for every choice of pivot nodes. Thus, the $(3+\varepsilon)$-approximation holds in expectation for uniform random pivot nodes.
\end{proof}
For a runtime analysis, recall that $\mathcal{D}$ and $\mathcal{H}$ can be computed in $O(n^3)$ time via Algorithms~\ref{alg:computeD} and~\ref{alg:heap}. Theorem~\ref{thm:covering} guarantees that $\mathcal{X}$ can be computed in $\tilde{O}(n^3)$ time. Given $\mathcal{X}$, building $\auxgraph$ takes $O(n^2)$ time since it just requires iterating through each edge once. Running \textsf{Pivot} in an $n$-node graph with uniform random pivot nodes takes $O(n^2)$ time, and Lemma~\ref{lem:3pt1} guarantees that our deterministic pivoting strategy takes $O(n^3)$ time. 

\section{\ccc{} with One-sided Constraints}
We now consider special cases of \ccc{} with one-sided constraints, meaning that there are only friendly constraints ($\hostile = \emptyset$), or only hostile constraints ($\friendly = \emptyset$). We call the former \fcc{} and the latter \hcc{}. For these problems, we can design algorithms that are even simpler (both in terms of their presentation and their analysis), while maintaining the same (or slightly better) guarantees. For \fcc{}, we design a simpler rounding algorithm for a covering LP that does not involve any \heap{} constraints. For \hcc{}, we can completely avoid solving a covering LP and still obtain a (randomized) 3-approximation in $O(n^3)$ time. 

\subsection{Only friendly constraints}
Let $G = (V,E^+,E^-,\friendly,\emptyset{})$ be an instance of \fcc{}. As before, we have a collection of supernodes $\mathcal{S}$ defined by friendly constraints, but no pair of supernodes is hostile. We assume $G$ is in consistent form, meaning that $uv \in E^+$ if $s(u) = s(v)$. If this does not hold we can run a simple pre-processing step to satisfy this property without affecting runtime or approximation factor (see Lemma~\ref{lem:consistent}). 
For this case, we consider a simpler covering LP without \heap{} constraints:
\begin{equation}
	\begin{aligned}
		\label{eq:fcc_lp}
		\min & \displaystyle\sum_{uv \in E^+ \cup E^-} x_{uv}  &\\
		\text{s.t. } & x_{uv} = X_{S(u)S(v)}^+ = X_{uv}^+& \text{ $\forall  uv \in E^+$}\\
		& x_{uv} = X_{S(u)S(v)}^- = X_{uv}^-& \text{ $\forall uv \in E^-$}\\
		& X_{AA}^+ = 1-X_{AA}^- = 0 & \text{ $\forall  A \in \mathcal{S}$} \\
		&\begin{rcases}
			X_{AB}^+ + X_{BC}^+ + X_{AC}^- \geq 1 \\ 
			X_{AB}^+ + X_{BC}^- + X_{AC}^+ \geq 1 \\
			X_{AB}^- + X_{BC}^+ + X_{AC}^+ \geq 1 \\
		\end{rcases} & \text{ $\forall  \{A,B,C\} \in {\mathcal{S} \choose 3}$} \\
		& X_{AB}^+ + X_{AB}^- \geq 1 & \text{$\forall  A,B \in \mathcal{S}$}\\
		& X_{AB}^+ \geq 0, X_{AB}^- \geq 0 & \text{$\forall  A,B \in \mathcal{S}$.}	
	\end{aligned}
\end{equation}

\begin{algorithm}[t]
	\caption{$(3+\varepsilon)$-approximation for \fcc{}}
	\label{alg:friendly}
	\begin{algorithmic}[1]
		\State \textbf{Input}: \fcc{} instance $G$ in consistent form, approximation parameter $\varepsilon \in (0,1)$
		\State \textbf{Output}: feasible clustering for \fcc{}
		\State $\mathcal{X} \longleftarrow $ $\big(1+\frac{\varepsilon}{3}\big)$-approximate solution to LP~\eqref{eq:fcc_lp}
		\State $\auxe^+ \longleftarrow  \{uv \colon X_{uv}^- \geq X_{uv}^+\}$
		\State $\auxe^- \longleftarrow  \{uv \colon X_{uv}^+ > X_{uv}^-\}$
		\State $\mathcal{C} \longleftarrow \textsf{Pivot}(\auxgraph = (V,\auxe^+, \auxe^-))$
		\State Return $\mathcal{C}$
	\end{algorithmic}
\end{algorithm}
Algorithm~\ref{alg:friendly} is pseudocode for rounding an approximate solution to this LP relaxation. It constructs a new graph $\auxgraph = (V, \auxe^+, \auxe^-)$, where the sign of edges between two supernodes $A$ and $B$ is determined only by whether $X_{AB}^- \geq X_{AB}^+$. The construction guarantees that every edge $uv$ in the same supernode defines a positive edge, and furthermore that $u$ and $v$ have the same positive and negative neighbors. Hence, applying \textsf{Pivot} to $\auxgraph$ satisfies all friendly constraints. As before, let $V_i$ denote nodes that remain unclustered after $i -1$ pivot steps, $\auxe^-_i = \auxe^- \cap (V_i \times V_i)$, and $\auxe^+_i = \auxe^+ \cap (V_i \times V_i)$. For arbitrary $p \in V_i$, define
\begin{align}
	\label{eq:mpp}
	\badtri^+_p(\gi) = \{uv \in \auxe_i^+ \colon pu \in \auxe_i^+, pv \in \auxe_i^-\} \text{ and } 
	\badtri^-_p(\gi) = \{uv \in \auxe_i^- \colon pu \in \auxe_i^+, pv \in \auxe_i^+\}.
\end{align}
\begin{theorem}
	\label{thm:fcc}
	For constant $\varepsilon > 0$, Algorithm~\ref{alg:friendly} is a $(3+\varepsilon)$-approximation for \fcc{} if the $i$th pivot node is chosen to minimize
	\begin{align}
		\label{eq:ratiofcc}
		\frac{|\badtri^+_p(\gi) \cap E^+| + |\badtri^-_p(\gi) \cap E^-|}{ \sum_{uv \in \badtri^+_p(\gi) \cup \badtri^-_p(\gi)}x_{uv}}.
	\end{align}
\end{theorem}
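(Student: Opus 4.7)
The plan is to specialize the charging argument of Theorem~\ref{thm:main} to the easier setting of \fcc{}, where $\hostile = \emptyset$ eliminates the need for dangerous pairs, balanced edges, and \heap{} constraints. Four pieces are required: (i) pivot-safety of $\auxgraph$, (ii) a bad-triangle inequality, (iii) a lower bound on flipped-edge LP values, and (iv) the charging aggregation. For (i), only the friendly property of Definition~\ref{def:pivotsafe} applies; it follows because any $u,v$ with $s(u)=s(v)$ force $X_{uv}^+=0$ and $X_{uv}^-\geq 1$, placing $uv\in\auxe^+$, while $X_{S(u)S(w)}^\pm=X_{S(v)S(w)}^\pm$ for every other $w$ makes $u$ and $v$ share $\auxgraph$-neighborhoods. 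Hence $\textsf{Pivot}(\auxgraph)$ always returns a feasible clustering.

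For (ii), I would show that every $abc\in\badtri(\auxgraph)$ satisfies
\begin{equation*}
\mathbbm{1}(ab\in E^+)+\mathbbm{1}(bc\in E^+)+\mathbbm{1}(ac\in E^-)\leq 3(x_{ab}+x_{bc}+x_{ac})
\end{equation*}
by case analysis on $ac$. If $ac\in E^-$ (unflipped), the construction rule $ab,bc\in\auxe^+\Rightarrow X_\cdot^-\geq X_\cdot^+$ forces $x_{ab}\geq X_{ab}^+$ and $x_{bc}\geq X_{bc}^+$ regardless of sign, and the triangle constraint $X_{ab}^++X_{bc}^++X_{ac}^-\geq 1$ gives $3\cdot\text{sum}\geq 3\geq$ LHS. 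If $ac\in E^+$ (flipped), the rule $X_{ac}^+>X_{ac}^-$ combined with $X_{ac}^++X_{ac}^-\geq 1$ forces $x_{ac}>1/2$; the critical sub-case is $ab,bc\in E^+$ (LHS $=2$), which I handle by combining $X_{ab}^++X_{bc}^+\geq 1-X_{ac}^-$ with $X_{ac}^-<X_{ac}^+$ to conclude $x_{ab}+x_{bc}+x_{ac}>1$, hence $3\cdot\text{sum}>3\geq$ LHS. When exactly one of $\{ab,bc\}$ is flipped (say $ab\in E^-$, so $x_{ab}=X_{ab}^-\geq 1/2$), LHS $\leq 1$ and $3x_{ab}\geq 3/2$ suffices. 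Degenerate coincidences $S(a)=S(b)$ or $S(b)=S(c)$ collapse via equality of supernode variables to the edge constraint $X_{ac}^++X_{ac}^-\geq 1$.

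For (iii), any flipped $uv$ satisfies $x_{uv}=\max\{X_{uv}^+,X_{uv}^-\}\geq 1/2$ via $X_{uv}^++X_{uv}^-\geq 1$. For (iv), I would mimic the proof of Theorem~\ref{thm:main} with all type-$\mathcal{B}$ and $E^\texttt{b}$ contributions set to zero. Summing (ii) over bad triangles of $\gi$ gives $\sum_{p\in V_i}(|\badtri_p^+\cap E^+|+|\badtri_p^-\cap E^-|)\leq\sum_{p\in V_i}\sum_{uv\in\badtri_p^+\cup\badtri_p^-}3x_{uv}$, so a pivot achieving ratio at most $3$ in~\eqref{eq:ratiofcc} exists at every iteration. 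Aggregating over iterations yields $|\bpp|+|\imm|\leq 3\sum_{uv\in W_\pivset}x_{uv}$ with $W_\pivset=\bpp\cup\bmp\cup\imm\cup I^\texttt{+-}$, and (iii) yields $|B^\texttt{+-}|+|\imp|\leq 2\sum_{uv\in B^\texttt{+-}\cup\imp}x_{uv}$. Since $W_\pivset$ and $B^\texttt{+-}\cup\imp$ are disjoint subsets of $E^+\cup E^-$, the total cost is at most $3\sum_{uv}x_{uv}\leq 3(1+\varepsilon/3)\mathrm{OPT}=(3+\varepsilon)\mathrm{OPT}$.

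The main obstacle is the single new case in the bad-triangle inequality where $ac\in E^+$ is flipped and both $ab,bc\in E^+$ are unflipped: Lemma~\ref{lem:badtri} resolved this using a \heap{} constraint, which LP~\eqref{eq:fcc_lp} does not include. The replacement is the LP-combining trick above, using the strict inequality $X_{ac}^+>X_{ac}^-$ guaranteed by Algorithm~\ref{alg:friendly} together with two LP constraints to squeeze out $x_{ab}+x_{bc}+x_{ac}>1$; this is precisely what lets LP~\eqref{eq:fcc_lp} dispense with \heap{} constraints entirely.
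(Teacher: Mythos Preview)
Your proposal is correct and follows the same overall structure as the paper: show $\auxgraph$ is pivot-safe, establish a bad-triangle inequality, bound LP values on flipped edges by $1/2$, and aggregate via the ratio-minimizing pivot. The one notable difference is in how step~(ii) is established. You carry out a case analysis modeled on Lemma~\ref{lem:badtri}, and identify the all-positive case ($ab,bc,ac\in E^+$ with $ac$ flipped) as the ``main obstacle'' requiring a special LP-combining trick to replace the missing \heap{} constraint. The paper instead observes that the simpler threshold rule in Algorithm~\ref{alg:friendly} makes this case, and every other case, fall to a single uniform argument: $ab\in\auxe^+$ always gives $x_{ab}\geq X_{ab}^+$ (either $x_{ab}=X_{ab}^+$ or $x_{ab}=X_{ab}^-\geq X_{ab}^+$), and $ac\in\auxe^-$ always gives $x_{ac}\geq X_{ac}^-$, so $x_{ab}+x_{bc}+x_{ac}\geq X_{ab}^++X_{bc}^++X_{ac}^-\geq 1$ directly from the supernode-triangle constraint. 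Your LP-combining step is in fact exactly this inequality specialized to one sub-case; recognizing that it works uniformly eliminates the case analysis and makes clear there is no obstacle at all. (Also, the degenerate supernode coincidences you mention cannot occur: $abc\in\badtri(\auxgraph)$ forces $S(a),S(b),S(c)$ distinct because $\auxgraph$-signs are determined at the supernode level.)
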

\begin{proof}
Consider an arbitrary bad triangle $abc \in \badtri(\auxgraph)$. Let $A = S(a)$, $B = S(b)$, and $C = S(c)$. 
If $ab \in E^+$, then $x_{ab} = X_{ab}^+$, otherwise $x_{ab} = X_{ab}^- \geq X_{ab}^+$. Either way, $x_{ab} \geq X_{ab}^+$. Using a similar argument, we can show that $x_{bc} \geq X_{bc}^+$ and $x_{ac} \geq X_{ac}^-$. Thus, 
\begin{align}
	\label{eq:triplef}
	x_{ab} + x_{bc} + x_{ac} \geq X_{ab}^+ + X_{bc}^+ + X_{ac}^- \geq 1.
\end{align}
For a fixed iteration $i$, let $\badtri_i = \badtri(\gi)$. For an arbitrary $p \in V_i$ write $\badtri_p^+ = \badtri^+_p(\gi)$ and $\badtri_p^- = \badtri^-_p(\gi)$ for simplicity. Using inequality~\eqref{eq:triplef} we see 
\begin{align*}
	\sum_{p \in V_i} |\badtri_p^+ \cap E^+| + |\badtri_p^- \cap E^-|  
	& \leq \sum_{p \in V_i} |\badtri_p^+| + |\badtri_p^-| = 3 |\badtri_i|  \leq 3\sum_{abc \in \badtri_i} x_{ab} + x_{bc} + x_{ac}= 3\sum_{p \in V_i} \sum_{uv \in \badtri_p^+ \cup \badtri_p^-} x_{uv}.
\end{align*}
Thus, minimizing the ratio in~\eqref{eq:ratiofcc} guarantees that at iteration $i$ we choose some $p$ satisfying:
\begin{align}
	\label{eq:pivotsat}
	|\badtri_p^+ \cap E^+| + |\badtri_p^- \cap E^-|   \leq \sum_{uv \in \badtri_p^+ \cup \badtri_p^-} 3x_{uv}.
\end{align}

Let $\mathcal{C}$ be the clustering obtained by pivoting in $\auxgraph$ using this deterministic strategy. To bound the total number of mistakes, first partition edges based on their signs in both $G$ and $\auxgraph$:
\begin{align*}
	E^{\texttt{-}\texttt{-}} = E^- \cap \auxe^-, \quad  E^\texttt{-+} = E^- \cap \auxe^+, \quad  E^\texttt{++} = E^+ \cap \auxe^+, \quad  E^\texttt{+-} = E^+ \cap \auxe^-.
\end{align*} 
For edge class $\texttt{cl} \in \{\texttt{++}, \texttt{+-}, {\texttt{-}\texttt{-}}, \texttt{-+}\}$, let $B^\texttt{cl}$ and $I^\texttt{cl}$ denote edges from $E^\texttt{cl}$ that end up on the boundary or interior of clusters respectively. The cost of the clustering $\mathcal{C}$ is then given by
\begin{align*}
	\text{cost}_G(\mathcal{C}) = |B^\texttt{++}| + |B^\texttt{+-}| + |I^{\texttt{-}\texttt{-}}| + |I^\texttt{-+}|.
\end{align*}
Inequality~\eqref{eq:pivotsat} shows that summing across all pivot nodes chosen throughout the algorithm gives
\begin{align*}
	|B^\texttt{++}|  + |I^{\texttt{-}\texttt{-}}| \leq \sum_{uv \in B^\texttt{++} \cup B^\texttt{-+}\cup I^{\texttt{-}\texttt{-}} \cup I^\texttt{+-}} 3 x_{uv}.
\end{align*}
Furthermore, note that $uv \in E^\texttt{+-} \cup E^\texttt{-+} \implies x_{uv} \geq \frac12$. In more detail, if $uv \in E^\texttt{+-}$, then $x_{uv} = X_{uv}^+ > X_{uv}^-$, so the inequality $X_{uv}^+ + X_{uv}^- \geq 1$ implies that $X_{uv}^+ \geq \frac12$. An analogous argument can be shown for $uv \in E^\texttt{-+}$. Thus,
\begin{align}
	|B^\texttt{++}| + |B^\texttt{+-}| + |I^{\texttt{-}\texttt{-}}| + |I^\texttt{-+}| \leq \sum_{uv \in B^\texttt{++} \cup B^\texttt{-+}\cup I^{\texttt{-}\texttt{-}} \cup I^\texttt{+-}}  3 x_{uv} + \sum_{uv \in B^\texttt{+-} \cup I^\texttt{-+}} 2x_{uv} \leq \sum_{uv \in E^+ \cup E^-} 3x_{uv}.
\end{align}
Since Algorithm~\ref{alg:friendly} starts with a $(1+ \varepsilon/3)$-approximate solution for the LP relaxation, this cost is within a factor 3 of the optimal \fcc{} cost.		
\end{proof}
As was the case for Theorem~\ref{thm:main}, the approximation guarantee will hold in expectation if we choose pivot nodes uniformly at random. We can also use similar arguments to prove an $\tilde{O}(n^3)$ runtime guarantee.

\subsection{Only hostile constraints}
Let $G = (V,E^+,E^-,\emptyset{},\hostile)$ be an instance of \hcc{} where $\hostile \subseteq E^-$.

\textbf{Constructing $\auxgraph$.}
We cannot directly apply \textsf{Pivot} to $G$, as this could lead to a hostile edge violation. In particular, $G$ may contain a triplet of nodes $\{a,b,c\}$ such that $ab \in E^+$, $bc \in E^+$, and $ac \in \hostile$. Observe that $(ab,bc)$ defines a dangerous pair in $G$. Because \hcc{} is just an instance of \ccc{} with supernodes of size 1, all dangerous pairs in $G$ correspond to a triangle of this form, with two positive edges and one hostile edge.
To avoid hostile edge violations when pivoting, we first find a maximal edge-disjoint set of dangerous pairs $\mathcal{D}$. Let $E_\mathcal{D}$ represent the set of edges defining $\mathcal{D}$. For $uv \in E_\mathcal{D}$, let $\rho(uv) \in E_\mathcal{D}$ be the edge such that $(uv, \rho(uv)) \in E_\mathcal{D}$. Our approximation algorithm for \hcc{} (Algorithm~\ref{alg:hostile}) forms a graph $\auxgraph$ by flipping all edges in $E_\mathcal{D}$ from positive to negative, and then applies \textsf{Pivot} to $\auxgraph$. 
The maximality of $\mathcal{D}$ ensures that this will produce a feasible clustering for \hcc{} on $G$ using any choice of pivot nodes.

\textbf{The \hcc{} covering LP.} 
To analyze Algorithm~\ref{alg:hostile}, we consider a simple covering LP without supernode variables or \heap{} constraints:
\begin{equation}
	\begin{aligned}
		\label{eq:hcc_lp}
		\min & \sum_{uv \in E^+ \cup E^-} x_{uv}  &\\
		\text{s.t. } & x_{ab} + x_{bc} + x_{ac} \geq 1 & \forall abc \in \badtri(G) \\
		& x_{ab} = 0 & \forall ab \in \hostile \\
		& x_{ab} \geq 0 & \forall ab \notin \hostile.
	\end{aligned}
\end{equation}
Similar to Theorem~\ref{thm:fcc}, we can explicitly solve or approximate this LP, and use the resulting LP variables to design a deterministic pivoting strategy that produces a 3-approximation. Notably, because the construction of $\auxgraph$ does not rely on LP variables, we can also select pivots uniformly at random to obtain a much simpler randomized 3-approximation that runs in $O(n^3)$ time. As before, we let $\gi{} = (V_i, \auxe^+_i, \auxe_i^-)$, represent the subgraph of $\auxgraph$ after selecting the first $i-1$ pivot nodes, $\badtri_i$ be the set of bad triangles in $\gi{}$, and $\badtri_p^+$ and $\badtri_p^-$ be defined as in Eq.~\eqref{eq:mpp}. To define the deterministic pivoting strategy, set a budget for each edge as follows:
\begin{equation}
	\label{yuvhcc}
	y_{uv} = \begin{cases}
		2 & \text{if $uv \in E_\mathcal{D}$}\\
		3x_{uv} & \text{ otherwise},
		\end{cases}
\end{equation}
where $\{x_{uv}\}_{uv \in E^+ \cup E^-}$ is a $(1+\varepsilon/3)$-approximate solution for LP~\eqref{eq:hcc_lp}.
\begin{algorithm}[t]
	\caption{Approximation algorithm for \hcc{}}
	\label{alg:hostile}
	\begin{algorithmic}[1]
		\State \textbf{Input}: \hcc{} instance $G$ where $\hostile \subseteq E^-$
		\State \textbf{Output}: feasible clustering for \hcc{}
		\State $\mathcal{D} \longleftarrow \textsf{Compute}\mathcal{D}(G)$ 
		\State $\auxe^+ \longleftarrow  E^+ \setminus E_\mathcal{D}$
		\State $\auxe^- \longleftarrow  E^- \cup E_\mathcal{D}$
		\State $\mathcal{C} \longleftarrow \textsf{Pivot}(\auxgraph = (V,\auxe^+, \auxe^-))$
		\State Return $\mathcal{C}$
	\end{algorithmic}
\end{algorithm}

\begin{theorem}
	\label{thm:hcc}
	For $\varepsilon > 0$, Algorithm~\ref{alg:hostile} is a deterministic $(3+\varepsilon)$-approximation for \hcc{} if, for the edge budgets in Eq.~\eqref{yuvhcc}, the $i$th pivot node is chosen to minimize the ratio
	\begin{align}
		\label{eq:ratio2}
		\frac{|\badtri^+_p(\gi) \cap E^+| + |\badtri^-_p(\gi) \cap E^-|}{ \sum_{uv \in \badtri^+_p(\gi) \cup \badtri^-_p(\gi)} y_{uv}}.
	\end{align}
	If pivots are chosen uniformly at random, Algorithm~\ref{alg:hostile} is a randomized 3-approximation.
\end{theorem}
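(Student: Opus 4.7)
The plan is to specialize the argument of Theorem~\ref{thm:main} to the one-sided hostile setting, where the analysis simplifies substantially because every supernode is a singleton and $\auxgraph$ is obtained by flipping only edges of $E_\mathcal{D}$ from positive to negative. I would first verify that $\auxgraph = (V, E^+ \setminus E_\mathcal{D}, E^- \cup E_\mathcal{D})$ is pivot-safe: the maximality of $\mathcal{D}$ rules out any triple $\{u, v, w\}$ with $uw \in \hostile$ and $uv, vw \in \auxe^+$, since such a triple would yield a valid dangerous pair to add to $\mathcal{D}$. The edge classes then collapse to $E^{\texttt{-}\texttt{-}} = E^-$, $E^\texttt{++} = E^+ \setminus E_\mathcal{D}$, $E^\texttt{+-} = E_\mathcal{D}$, and $E^\texttt{-+} = \emptyset$, and every edge of $E_\mathcal{D}$ is ``balanced'' in the sense of Theorem~\ref{thm:main}.

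Next, I would partition bad triangles of $\auxgraph$ into type-$\mathcal{A}$ (with $ac \in E^-$) and type-$\mathcal{B}$ (with $ac \in E_\mathcal{D}$). Each type-$\mathcal{A}$ triangle is itself a bad triangle in $G$, so LP~\eqref{eq:hcc_lp} gives $x_{ab} + x_{bc} + x_{ac} \geq 1$ directly. The analogue of Lemma~\ref{lem:edgefacts}(a) and (d) is that for each $ac \in E_\mathcal{D}$ the dangerous pair $(ac, \rho(ac))$ corresponds to a $G$-bad triangle $\{a, c, w\}$ with $aw \in \hostile$ (so $x_{aw} = 0$) and $cw = \rho(ac)$, yielding $x_{ac} + x_{\rho(ac)} \geq 1$ and hence $|E_\mathcal{D}| \leq 2 \sum_{uv \in E_\mathcal{D}} x_{uv}$.

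From here, the pivot-existence argument of Lemma~\ref{lem:pivotexists} goes through essentially verbatim: combining the double-counting identity $\sum_p |\mathcal{B}_p^+| = 2 \sum_p |\mathcal{B}_p^-|$ with the type-$\mathcal{A}$ bound, there exists $p \in V_i$ satisfying $|\badtri_p^+(\gi) \cap E^+| + |\badtri_p^-(\gi) \cap E^-| \leq 2|\mathcal{B}_p^-| + 3\sum_{uv \in \badtri_p^+(\gi) \cup \mathcal{A}_p^-(\gi)} x_{uv}$, and the ratio-minimizing pivot from Lemma~\ref{lem:3pt1} with the budgets of Eq.~\eqref{yuvhcc} achieves this at each iteration. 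Summing over iterations yields $|B^\texttt{++}| + |I^{\texttt{-}\texttt{-}}| \leq 2|I^\texttt{+-}| + 3\sum_{uv \in B^\texttt{++} \cup I^{\texttt{-}\texttt{-}}} x_{uv}$. Since $|B^\texttt{+-}| + |I^\texttt{+-}| = |E_\mathcal{D}|$ and $|I^\texttt{+-}| \leq |E_\mathcal{D}|/2$ (the dangerous-pair partner of an interior $E_\mathcal{D}$-edge must lie on the boundary, or else there is a hostile violation), combining these with $|E_\mathcal{D}| \leq 2\sum_{uv \in E_\mathcal{D}} x_{uv}$ gives $\text{cost}_G(\mathcal{C}) = |B^\texttt{++}| + |B^\texttt{+-}| + |I^{\texttt{-}\texttt{-}}| \leq 3 \sum_{uv \in E^+ \cup E^-} x_{uv}$. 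Scaling by the $(1+\varepsilon/3)$ LP approximation delivers the deterministic $(3+\varepsilon)$ bound.

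For the pure randomized 3-approximation, the same chain of inequalities holds in expectation under uniform random pivots, and it holds for \emph{any} feasible solution $\{x_{uv}\}$ to LP~\eqref{eq:hcc_lp}. Taking $x^*$ to be the integer mistake indicator of an optimal \hcc{} clustering yields a feasible LP solution of objective $\text{OPT}$ (OPT makes at least one mistake per bad triangle and zero at hostile edges), so $E[\text{cost}_G(\mathcal{C})] \leq 3 \cdot \text{OPT}$ with no LP solve required. The main obstacle is the bookkeeping that links boundary versus interior $E_\mathcal{D}$ edges to the dangerous-pair LP bound, ensuring that the ``extra'' $G$-mistakes on $E_\mathcal{D}$-boundary (which do not appear among $\auxgraph$-mistakes) are fully absorbed by $2\sum x_{uv}$; this step mirrors Eq.~\eqref{eq:hardblue} in Theorem~\ref{thm:main}, but is simpler here because $E^\texttt{o} = \emptyset$ and no \heap{} constraints are needed.
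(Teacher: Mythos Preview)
Your proposal is correct and follows essentially the same approach as the paper's proof: the same edge classification, the same $\mathcal{A}/\mathcal{B}$ split of bad triangles in $\auxgraph$, the same double-counting identity $\sum_p |\mathcal{B}_p^+| = 2\sum_p |\mathcal{B}_p^-|$, and the same chain $|B^\texttt{+-}| + 2|I^\texttt{+-}| \leq \tfrac{3}{2}|E_\mathcal{D}| \leq 3\sum_{uv \in E_\mathcal{D}} x_{uv}$ to absorb the flipped-edge mistakes. The only cosmetic difference is that for the randomized bound you instantiate $\{x_{uv}\}$ with the integer mistake indicator of an optimal clustering, whereas the paper invokes the (implicit) optimal LP solution; both give the same $3\cdot\text{OPT}$ conclusion.
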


\begin{proof}
Define $E^\texttt{++} = E^+ \setminus E_\mathcal{D}$, $E^\texttt{+-} = E_\mathcal{D}$, and $E^{\texttt{-}\texttt{-}} = E^-$ to classify each edge based on its sign in $G$ and $\auxgraph$.
Let $\mathcal{C}$ be a clustering obtained by selecting pivots in $\auxgraph$ to minimize the ratio in Eq.~\eqref{eq:ratio2} at each step. For every class $\texttt{cl} \in \{\texttt{++}, \texttt{+-}, {\texttt{-}\texttt{-}}\}$, let $B^\texttt{cl}$ and $I^\texttt{cl}$ represent the set of edges from $E^\texttt{cl}$ that are between or inside clusters in $\mathcal{C}$ respectively, so that the cost is given by
\begin{align*}
	\text{cost}_G(\mathcal{C}) = |B^\texttt{++}| + |I^{\texttt{-}\texttt{-}}| + |B^\texttt{+-}|.
\end{align*}
To bound this cost, we first separate bad triangles $\badtri_i$ from $\auxgraph_i$ into two classes, based on whether the negative edge is from $E^-$ or $E_\mathcal{D}$:
	\begin{align*}
		\mathcal{A}_i &= \{abc \colon ab \in \auxe_i^+, bc \in \auxe_i^+, ac \in E^-\}\\
		\mathcal{B}_i &=\{abc \colon ab \in \auxe_i^+, bc \in \auxe_i^+, ac \in E_\mathcal{D}\}.
	\end{align*}
	For every $abc \in \mathcal{A}_i$, we know $x_{ab} + x_{bc} + x_{ac} \geq 1$ since $\mathcal{A}_i \subseteq \badtri(G)$. For an arbitrary $p \in V_i$, define
	\begin{align*}
	\mathcal{A}_p^+(\gi{}) &= \{uv \in \auxe_i^+ \colon pu \in \auxe_i^+, pv \in E^-\} & \mathcal{A}_p^-(\gi{}) &= \{uv \in E^- \colon pu \in \auxe_i^+, pv \in \auxe_i^+\}\\
	\mathcal{B}_p^+(\gi{}) &= \{uv \in \auxe_i^+ \colon pu \in \auxe_i^+, pv \in E_\mathcal{D}\} & 
	\mathcal{B}_p^-(\gi{}) &= \{uv \in E_\mathcal{D}\colon pu \in \auxe_i^+, pv \in \auxe_i^+\}.
\end{align*}
	For simplicity we drop the $\gi$ when it is clear from context, e.g.,  $\mathcal{A}_p^+= \mathcal{A}_p^+(\gi)$.

Summing $|\mathcal{A}_p^+| + |\mathcal{A}_p^-|$ across all potential pivot nodes in the graph $\gi$ gives
	\begin{align}
		\label{eq:boundt}
		\sum_{p \in V_i} |\mathcal{A}_p^+| + |\mathcal{A}_p^-| = 3 |\mathcal{A}_i| \leq \sum_{abc \in \mathcal{A}_i} 3(x_{ab} + x_{bc} + x_{ac}) = \sum_{p \in V_i}\sum_{uv \in \mathcal{A}_p^+ \cup \mathcal{A}_p^-} 3x_{uv} \leq \sum_{p \in V_i} \sum_{uv \in \badtri_p^+ \cup \mathcal{A}_p^-} y_{uv}.
	\end{align}
Similarly, we can observe that:
	\begin{align}
		\label{eq:boundb}
		\sum_{p \in V_i} |\mathcal{B}_p^+| = 2 |\mathcal{B}_i| = 2 \sum_{p \in V_i} |\mathcal{B}_p^-| = \sum_{p \in V_i} \sum_{uv \in \mathcal{B}_p^-} y_{uv}.
	\end{align}
	Combining~\eqref{eq:boundt} and~\eqref{eq:boundb}, we know that there is always a pivot node in $\gi$ satisfying
	\begin{align}
		\label{eq:ineqmain}
		|\mathcal{A}_p^+| + |\mathcal{A}_p^-| + |\mathcal{B}_p^+|  \leq  
		\sum_{uv \in \badtri_p^+\cup \badtri_p^-} y_{uv}=
		2 |\mathcal{B}_p^-| + \sum_{uv \in \badtri_p^+ \cup \mathcal{A}_p^-} 3x_{uv}. 
	\end{align}

Observe now that
	$\mathcal{A}_p^+ \cup \mathcal{B}_p^+ = \badtri_p^+ = \badtri_p^+ \cap E^+\subseteq E^\texttt{++}$ is the set of new positive mistakes from $E^\texttt{++}$ that we make by pivoting on $p$, and $\mathcal{A}_p^- = \badtri_p^- \cap E^-$ is the set of negative edges from $G$ that end up inside $C_p$. The set $\mathcal{B}_p^-$ is all edges from $E^\texttt{+-} = E_\mathcal{D}$ that end up inside clusters when pivoting on $p$. From this we can see that the pivot minimizing the ratio in Eq.~\eqref{eq:ratio2} is guaranteed to satisfy inequality~\eqref{eq:ineqmain}, since the left hand side of~\eqref{eq:ineqmain} is equivalent to the numerator of~\eqref{eq:ratio2}, and the right hand side of~\eqref{eq:ineqmain} is equivalent to the denominator of~\eqref{eq:ratio2}. Furthermore, if we sum both sides of inequality~\eqref{eq:ineqmain} across all steps of \textsf{Pivot}, we have
	\begin{align}
		\label{eq:almost} 
		|B^\texttt{++}| + |I^{\texttt{-}\texttt{-}}| \leq 2|I^\texttt{+-}| + \sum_{uv \in B^\texttt{++} \cup I^{\texttt{-}\texttt{-}}} 3 x_{uv}.
	\end{align}
	
	For every $uv \in E^\texttt{+-} = E_\mathcal{D}$, the LP constraints guarantee that $x_{uv} + x_{\rho(uv)} \geq 1$. Furthermore, if $uv$ is inside of a cluster, then $\mathcal{C}$ must make a mistake at $\rho(uv)$, or else there would be a hostile edge violation. Therefore,
	\begin{align}
		\label{eq:dbound}
		|I^\texttt{+-}| \leq |E^\texttt{+-}|/2 \leq \sum_{uv \in E^\texttt{+-}} x_{uv}.
	\end{align}
Combining this with~\eqref{eq:almost} allows us to bound $\text{cost}_G(\mathcal{C}) = |B^\texttt{+-}| + |B^\texttt{++}| + |I^{\texttt{-}\texttt{-}}|$ as follows
\begin{align*}
|B^\texttt{+-}| + |B^\texttt{++}| + |I^{\texttt{-}\texttt{-}}|  \leq |B^\texttt{+-}| + 2|I^\texttt{+-}| + \sum_{uv \in B^\texttt{++} \cup I^{\texttt{-}\texttt{-}}} 3 x_{uv} \leq \frac{3}{2} |E^\texttt{+-}| + \sum_{uv \in B^\texttt{++} \cup I^{\texttt{-}\texttt{-}}} 3 x_{uv} \leq \sum_{uv \in E^+ \cup E^-} 3 x_{uv}.
\end{align*}
Since we assumed $\{x_{uv}\}_{uv \in E^+ \cup E^-}$ is a $(1+\varepsilon/3)$-approximate LP solution, this proves that the deterministic pivoting strategy returns a $(3+\varepsilon)$-approximate solution for \hcc{}. If we instead chose pivots uniformly at random, then inequality~\eqref{eq:ineqmain} holds in expectation at each step, as does the final bound on $\text{cost}_G(\mathcal{C})$. Since we do not need to explicitly solve the LP at any point to choose pivots uniformly at random, we can assume without loss of generality that $\{x_{uv}\}_{uv \in E^+ \cup E^-}$ represents an implicit set of optimal LP variables (i.e., $\varepsilon = 0$), in which case we see the expected cost is within a factor $3$ of the optimal \hcc{} solution.
\end{proof}

\section{Conclusion and Discussion}
We have presented a $(3+\varepsilon)$-approximation algorithm for \ccc{} with a runtime of $\tilde{O}(n^3)$ time, settling the open question of Fischer et al.~\cite{fischer2025faster} up to an arbitrarily small constant $\varepsilon > 0$. The algorithm is simple in that it applies a familiar pivoting framework and can be made purely combinatorial. It also relies on rounding a covering LP relaxation, rather than on a variant of the Canonical LP or the Cluster LP, which are more expensive and complicated to solve or approximate.

Our work also includes simpler algorithms when there are only hostile or only friendly constraints. Each simplified algorithm includes a key advance needed to generalize the celebrated 3-approximate \textsf{Pivot} algorithm to a case involving one of the new constraint types. For \fcc{}, the main advance is to incorporate superedge constraints into a new covering LP formulation. The rest follows mostly from a straightforward application of the techniques of van Zuylen and Williamson~\cite{vanzuylen2009deterministic}. For \hcc{}, a key step is to note that the algorithm will make a large number of mistakes at a certain class of edges ($E_\mathcal{D}$).
This leads to a tighter analysis than would be obtained by directly applying techniques of van Zuylen and Williamson (e.g., applying Theorem 3.1 of~\cite{vanzuylen2009deterministic}) as a black box. We remark that our strategy for \hcc{} directly generalizes the analysis for a recent 3-approximation for \textsc{Cluster Deletion}~\cite{balmaseda2024combinatorial}, which can be viewed as an instance of \ccc{} where all negative edges are hostile. 

The $(3+\varepsilon)$-approximation for \ccc{} combines the key advances used by algorithms for \fcc{} and \hcc{}. However, by themselves these advances appear insufficient to produce a $(3+\varepsilon)$-approximation for \ccc{}. In particular, the full algorithm for \ccc{} also uses the notion of \heap{} constraints, to ensure that a key inequality (see Lemma~\ref{lem:badtri}) holds for certain bad triangles. Although $O(n^3)$ \heap{} constraints are sufficient to guarantee the $(3+\varepsilon)$-approximation, an open question is whether they are strictly necessary to obtain this approximation, or if a different proof strategy could avoid them. Finally, it is interesting to note that for \hcc{}, we can obtain an $O(n^3)$-time 3-approximation, where the approximation does not depend on $\varepsilon > 0$, the runtime does not involve logarithmic factors, and there is no need to explicitly solve an LP relaxation. An interesting question is whether this could also be achieved for \ccc{}, or even just for \fcc{}, using alternative techniques.

\section*{Acknowledgments}
Nate Veldt is supported by  AFOSR Award Number FA9550-25-1-0151 and by ARO Award Number W911NF-24-1-0156. The views and conclusions contained in this document are those of the authors and should not be interpreted as representing the official policies,
either expressed or implied, of the Army Research Office or the U.S. Government.

\bibliography{refs}
\bibliographystyle{alpha}

\end{document}